  \providecommand\BibTeX{{%
    \normalfont B\kern-0.5em{\scshape i\kern-0.25em b}\kern-0.8em\TeX}}}
\newcommand{\tx}{\texttt{txn}}
\newcommand{\user}{\texttt{USER}}
\newtheorem{theorem}{Theorem}
\newtheorem{prop}{Property}
\newtheorem{lem}{Lemma}
\begin{document}

\title{Masquerade: Simple and Lightweight Transaction Reordering Mitigation in Blockchains}

 \author{Arti Vedula}
 \email{vedula.9@osu.edu}
 \orcid{0000-0002-2465-352X}
 \affiliation{%
   \institution{The Ohio State University}
   \city{Columbus}
   \state{Ohio}
   \country{USA}
   \postcode{43210}
 }

 \author{Shaileshh Bojja Venkatakrishnan}
 \email{ bojjavenkatakrishnan.2@osu.edu}
 \affiliation{%
   \institution{The Ohio State University}
   \city{Columbus}
   \state{Ohio}
   \country{USA}
   \postcode{43210}
 }
 \author{Abhishek Gupta}
 \email{gupta.706@osu.edu}
 \affiliation{%
   \institution{The Ohio State University}
   \city{Columbus}
   \state{Ohio}
   \country{USA}
   \postcode{43210}
 }

 \renewcommand{\shortauthors}{Vedula et al.}

\begin{abstract}
  Blockchains offer strong security gurarantees, but cannot protect users against the ordering of transactions. Players such as miners, bots and validators can reorder various transactions and reap significant profits, called the Maximal Extractable Value (MEV). In this paper, we propose an MEV aware protocol design called Masquerade, and show that it will increase user satisfaction and confidence in the system. We propose a strict per-transaction level of ordering to ensure that a transaction is committed either way even if it is revealed. In this protocol, we introduce the notion of a "token" to mitigate the actions taken by an adversary in an attack scenario. Such tokens can be purchased voluntarily by users, who can then choose to include the token numbers in their transactions. If the users include the token in their transactions, then our protocol requires the block-builder to order the transactions strictly according to token numbers. We show through extensive simulations that this reduces the probability that the adversaries can benefit from MEV transactions as compared to existing current practices.
\end{abstract}





\maketitle

\section{Introduction}

Blockchains offer a decentralized and transparent platform for recording transactions with strong security guarantees.
Originally conceived as a payment system, in recent years smart contracts in blockchains have enabled the development of a plethora of decentralized applications (dapps) in myriad service sectors. 
Among these, dapps developed for financial use cases---such as lending and borrowing, stablecoins, exchanges, insurance etc.---have emerged as a popular alternative to centralized financial institutions. 
Colloquially referred to as de-fi (for decentralized finance), financial smart contracts are projected to grow to a half a trillion market~\cite{defisurge}.

Concurrent with the growth of defi dapps, transaction reordering attacks have become commonplace in blockchains. 
This is when a malicious actor attempts to confirm its own transaction ahead of a victim transaction on the blockchain {\em after} observing the victim transaction. 
A frontrunning attack is problematic to the victim as it can unfairly diminish the value earned by the victim in the transaction. 
For example, a victim looking to purchase tokens available at a lower-than-market price at a decentralized exchange may not get that price if the attacker purchases those tokens first. 
Unfortunately, blockchains by design facilitate transaction reordering attacks: unconfirmed transactions submitted by users are publicly visible in a `mempool’ and transactions can be prioritized for confirmation by increasing their transaction fees.

Outside of financially hurting a victim, reordering attacks can seriously affect the security of the blockchain consensus protocol especially in proof-of-stake systems. 
It is reported that in 2022 alone more than 300 million dollars were unfairly gained through reordering attacks~\cite{mevoutlook}. 
With such fortune at stake, it is feasible for an attacker to bribe block proposers to act in an unethical way or even deviate from protocol~\cite{chitra2022improving}.

Preventing transaction reordering attacks has therefore become an active area of research with several mitigation techniques proposed so far~\cite{yang2022sok}.
In practice, Ethereum has favored an approach where transaction order is determined by (trusted) third-party entities called builders. 
A builder collects transactions from end-users, orders them within a block and sends the block to a proposer for inclusion within the blockchain. 
Separating out the functionalities of ordering transactions in a block (building) and publishing a block (proposing) protects block proposers from being influenced to rearrange transactions.  
However, reordering attacks still very much happen at the builder level. 
We note that not all builders practice or tolerate reordering attacks, but a vast majority of them do.

In our work we present Masquerade, a simple and lightweight solution to preventing transaction reordering in blockchains. 
The key idea in Masquerade is the use of a token-number system (as at a takeout restaurant or the drivers’ license agency) for deciding transaction ordering. 
In Masquerade, a user first purchases a token with an assigned token number for a refundable fee. 
Subsequently the user can use this token to make a transaction. 
Transactions are ordered in the block in an increasing order of the token numbers of tokens associated with transactions. 
Once a token is used with a transaction, the funds paid for purchasing the token are returned to the user. 

The key intuition for why Masquerade works is that at the time of a token purchase, an attacker has no way of knowing what the purchased token will be used for in the future. 
An attacker is therefore forced to purchase as many tokens as possible, which places a significant financial stress on the attacker. 
Under a formal model of the system, if the total funds available to an attacker is a fraction $\sigma < 1$ of the total funds available to honest users, we show that the fraction of transactions that can be front run is at most $\sigma$. 
Moreover, Masquerade guarantees that attacker to user wealth ratio diminishes over time making the fraction of transactions front run asymptotically go to zero. 
Experiments on synthetic and real-world transaction data support our theoretical observations. 
We also note that Masquerade is robust to transaction reordering attacks occurring on token-purchase transactions submitted by users.

Deploying Masquerade in practice requires minimal changes to the consensus protocol---blocks must be verified to ensure transactions are in increasing order of their token numbers.
Issuance and maintenance of tokens can be easily implemented as a smart contract.  
In contrast, existing proposals enforcing some notion of a fair-ordering of transactions~\cite{kelkar2020order,kelkar2021themis,zhang2020byzantine,cachin2022quick}, or a commit-then-reveal scheme~\cite{khalil2019tex,bentov2019tesseract,zhang2022flash,malkhi2022maximal} require either significant changes to the consensus algorithm, rely on complex cryptographic primitives or a trusted party~\cite{heimbach2022sok}. 
Masquerade can also be implemented as an opt-in system in which only users seeking protection against reordering attacks participate. 
In summary, the contributions of this paper are: 
\begin{itemize}
\item We propose a simple, lightweight solution that uses a token-based ordering mechanism for mitigating transaction reordering attacks in blockchains such as Ethereum. 
\item We rigorously show performance of Masquerade under a theoretical model. 
\item We evaluate Masquerade experimentally using a variety of synthetic and real-world data to corraborate our theoretical results. 
\end{itemize}

\section{Background}

\subsection{Maximal Extractable Value (MEV)}

With a push towards decentralization, there has been an increase in decentralized finance (DeFi) protocols that are run by smart contracts with individual users simply interacting with them. These are permissionless, transparent and eliminate the need for a central authority. DeFi services such as Uniswap\cite{adams2021uniswap}, Compound\cite{yang2022sok}, MakerDAO\cite{brennecke2022central} etc are based on open protocols and Decentralized Applications(DApps) and aim to recreate traditional financial systems and services without the need for a centralized authority. 
In DeFi dapps, the order in which transactions are executed play a very important role in profits gained by users of this system. 
Profits obtained by bad actors through transaction reordering attacks is called as maximal extractable value (MEV). 
MEV profits are in the millions of dollars today. 
As a result, front running victim transactions is a very lucrative option. 

In centralized finance markets (CeFi) such as Binance\cite{noauthor_binance_nodate}, CoinBase\cite{kazerani2017determining}, Kraken \cite{pichl2017volatility} etc, transaction ordering is strictly monitored by an authority, and adversaries are punished. 
A major problem with these CeFi protocols is that the users are required to relinquish control of their assets to the service provider and trust the operator while transacting.  
A canonical example of a DeFi app is a decentralized exchange (DEX), which is a platforms that allows trading through smart contracts on the blockchain. 
DEX transactions are particularly susceptible to front running attacks by adversaries, as there is a lack of any regulatory body that can punish these malpractices. 
Further, clients who have been manipulated by this exchange have no way of recovering their lost wealth.      

An Automated Market Maker(AMM) is a smart contract used in DEXs to facilitate the trading of digital assets without the need for traditional order books or other intermediaries. These AMMs use some predetermined formulas to determine the prices of assets based on their supply and demand within a liquidity pool. AMMs were introduced as an alternative to traditional order book exchanges, where buyers and sellers interact directly to set the prices of assets. An AMM contains a liquidity pool of atleast two assets, where it allows users to deposit an asset and withdraw the other based on the determined exchange value.
One of the most commonly used AMMs is the Constant Product Market Maker(CPMM) algorithm. This algorithm is used in DEXs like Uniswap\cite{adams2021uniswap} and PancakeSwap\cite{noauthor_pancakeswap_nodate}. In the CPMM algorithm, the product of the quantities of two assets in a pool remains constant, which helps determine the price when one asset is traded for another. Simply, if an exchange has $x$ and $y$ units of currency in their reserve, $xy=k$ denotes the price of the asset using the exchange.

\subsection{MEV Attacks}
Currently, in blockchains such as Ethereum, adversaries frequently monitor the mempool to search for a certain kind of MEV transactions which would yield reward at the expense of increased cost to the user. We explain three major attacks, benefits to the adversary, and the cost to the user below. 

\begin{enumerate}
    \item Sandwich Attack: In this case, an attacker attempts to monitor mempool for pending transactions trading large assets that may cause price fluctuations. Let a user transaction be $\tx_{u}$ that is exchanging currency $C_1$ to $C_2$, denoted by $C_1 \xrightarrow[]{}C_2$. An attacker $a$ then submits two transactions, say $\tx_{a}$ and $\tx_a'$, with the same transaction amount as the user $u$. In $\tx_{a}$, the attacker exchanges $C_1\xrightarrow{}C_2$, and in $\tx_a'$ exchanges $C_2 \xrightarrow[]{}C_1$. A sandwich attack happens if $\tx_{a}$, $\tx_{u}$ and $\tx_{a}'$ are included in the same block $\mathcal B$ and in this order.

    \item Arbitrage MEV: An arbitrage occurs when users are trading assets across different exchanges. Let a user transaction be $\tx_{u}$ that is trading $\alpha_u$ amount of currency $C_1$ with fiat money across two different exchanges $E_1$ and $E_2$ with different exchange rates $P_{C_1,E_1}$ and $P_{C_1,E_2}$, respectively. A successful arbitrage occurs if price $\alpha_u P_{C_1,E_1}+g_{u}<\alpha_u P_{C_1,E_2}$, for gas fees $g_u$. In this case, the attacker $a$ monitors the public mempool for an arbitrage opportunity and attempts to front-run $\tx_{u}$ with $\tx_{a}$ with $\alpha_a = \alpha_u$, gas fee $g_{a}>g_{u}$, so that $\alpha_{a}P_{C_1,E_1}+g_{a}<\alpha_{a}P_{C_1,E_2}$. In this case, the arbitrage benefit to the user is diminished and the adversary gains from front-running such a transaction.

    \item Liquidation MEV: A liquidation attack occurs on a loan taken out by a user. Let a user borrow $\alpha_{C_1}$ units of currency $C_1$, for a price of $\alpha_{C_1}C_1$. The user in exchange offers a collateral of units $\alpha_{C_2}$ of currency $C_2$ for a price of $\alpha_{C_2}C_2\geq \alpha_{C_1}C_1$. This attack occurs, when the collateral $\alpha_{C_2} C_2$ no longer covers the value of the debt. 
    Let the exchange rate at the time of borrowing be $C_2 = \beta_r C_1$. For the loan to be healthy, $\beta>1$. Due to price fluctuations, it is possible that $\frac{\alpha_{C_2}C_2}{\alpha_{C_1}C_1}<1$ which means that the loan is now under-collateralized. At this time, the collateral is available to users to purchase at a low rate. An adversary now frontruns the purchase of the collateral $\alpha_{C_2}C_2$, and is able to accrue a profit on this transaction by selling it later at a higher price.  
\end{enumerate}

\subsection{MEV Mitigation: Status Quo}
Ethereum uses a proposer-builder separation architecture for mitigating the negative impacts of MEV on blockchain security~\cite{heimbach2023ethereum}. 
Under this architecture, the process for the formation of a block is as follows. 
Users submit transactions either by broadcasting publicly over the blockchain network, or by sending privately to a third-party reputable entity called a builder (Flashbots, BeaverBuild, Builder0x69, BloXroute etc.).
Transactions made include an appropriate amount of transaction fees depending on the priority desired for the transaction. 
The builder collects transactions and orders them to create a candidate block which it then advertises to the proposer of that time slot.
Builder strive to construct blocks with high aggregate transaction fees, as a portion of the transactions fees goes to the builder.
Competing builders advertise blocks to the proposer. 
The proposes chooses the block containing the most amount of fees for publication on the blockchain.


\section{Model}
We consider time is divided in to discrete rounds, with one block produced during each round. 
Our model consists of two actors, a user and an adversary as described in the following. 

\smallskip 
\noindent
    {\bf User:} The user in our model represents collectively all honest users in the system. 
    The essence of our results does not change even if we explicitly consider multiple users in the model. 
    We assume the user is honest and follows protocol. 
    The user seeks to make MEV transactions without getting front run to maximize its profits. 
    We define an ``MEV transaction" as a DeFi transaction from which value can be extracted by the adversary through a reordering attack. 
    The user is interested in making at most one MEV transaction each round. 
    The profit gained by the user upon making a MEV transaction successfully (i.e., without getting front run) and the profits lost when a front running attack happens are discussed in \S\ref{s:modelrewards} below. 
    At the beginning of the experiment, the user has a net wealth of $W_u[0]$. 
    

    \smallskip 
    \noindent
    {\bf Adversary:} The adversary in our model is an entity that seeks to make profit by attacking the user's MEV transactions. 
    We primarily consider front running attacks in our work, though the results extend to back running and sandwich attacks as well. 
    During an MEV attack, the adversary gains precisely the value lost by the user on the transaction. 
    At the beginning of the experiment, we assume the adversary has a net wealth of $W_a[0]$. 
    We assume the total wealth $W_a[0]$ of the adversary is lower than the total wealth of the user $W[0]$ initially by a factor of $\sigma < 1$. 
    This is a reasonable assumption, as the security of Proof-of-Stake consensus followed by Ethereum relies on such an assumption as well. 
    For any round $r$, we let $W_u[r], W_a[r]$ respectively denote the total wealth of the user and adversary respectively at the beginning of round $r$. 
    Note that since the user can make at most one MEV transaction each round, the adversary can also front run at most one MEV transaction each round. 
    We do not consider regular (i.e., non-MEV) transactions made by the user or the adversary in our model. 
    The adversary has complete knowledge of the internal state of the user at any time. 
    Unless the proposed consensus protocol prohibits it, the adversary can front run any transaction submitted by the user during a round with its own transaction.

\smallskip
The blockchain network also contains builders, relays and validators, but for our problem we do not consider them to be an essential part of the network dynamics, and omit their roles. 
We define a transaction as $\tx_{u}[r]$ for a transaction made by a user at round $r$, and $\tx_{a}[r]$ for a transaction made by an adversary.



\subsection{Rewards}
\label{s:modelrewards}
We assume that the profit made by an honest user on an MEV transaction is $\eta$, of which he loses $f\eta$ if an adversary manages to front run the transaction. 
In practice, users can specify a slippage parameter to control their MEV loss which relates to the $f$ in our model~\cite{heimbach2022eliminating}. 
Thus, the rewards to the user and the adversary, respectively, $h_{u}[r],h_{a}[r]$ in round $r$ can be defined as follows:
\begin{align}
h_{u}[r] &= \begin{cases}
\eta-f\eta&\text{if MEV transaction is front run}\\
\eta&\text{otherwise},
\end{cases}  \\
h_{a}[r] &= \begin{cases}
f\eta&\text{if MEV transaction is front run}\\
0&\text{otherwise}.
\end{cases}  
\end{align}

We ignore the costs incurred by gas fees to the user and the adversary. 
In today's Ethereum (referred to as ``current protocol" in the paper), we assume a user's MEV transaction always get front run which leads to a profit of $\eta-f\eta$ per round for the user. 
This is a reasonable assumption, as a user today either issues its transaction publicly and gets attacked, or issues its transaction privately to a builder by paying hefty fees. 
Either way the value the user rightfully must gain in the transaction is lost in today's Ethereum. 
After $R$ rounds, the total wealth accumulated by  the user and the adversary in the current protocol from MEV transactions is given by:
\begin{align}
   W_u &=W_{u}[0]+\sum_{r=1}^{R}h_{u}[r] \\
     &=W_{u}[0]+(\eta-f\eta)R \\
   W_a & =W_{a}[0]+\sum_{r=1}^{R}h_{a}[r] \\
     &=W_{a}[0]+f\eta R
\end{align}

Thus, we see that an honest user is losing out on atleast $f\eta R$ profits on having made an MEV transaction every round for $R$ rounds. In the real world scenario, they end up losing even more money when  multiple MEV transactions are part of a block. 

\subsection{Problem Statement}


Our objective is to design a transaction ordering protocol that prevents  MEV attacks and maximizes the total wealth of the after $R$ rounds. 
Or, equivalently, we would like to reduce the fraction of transactions that are successfully attacked by the adversary. 
The solution space we explore must obey the following constraints. 
First, we do not want to introduce any significant modifications to the consensus protocol keeping in mind the difficulties involved in implementation. 
Any solution we propose must be implementable with just a few lines of code, either at the consensus or execution layers. 
We also avoid use of computationally expensive cryptographic algorithms due to their complexity of implementation. 
Finally, we would like our method to be resistant to attacks without the usage of any trusted third parties. 
                                                                                            
\section{Proposed Transaction Ordering Protocol}
 We introduce our transaction ordering protocol, called Masquerade, which is a decentralized protocol with minimal changes to the current consensus protocol and no reliance on external trusted parties. In our protocol, we modify a percentage of transactions $\tx_u,\tx_a$ to include a new parameter called "token number" with them. These transactions will then be ordered strictly according to the token number accompanying the transaction. As a result, even if the content of the MEV transaction is made aware to an attacker, they are unable to frontrun it without the relevant token number required to actually frontrun the transaction.



We add two new kinds of transactions called "token purchase transaction" and "tokenized transaction" that we will describe in more detail below: 

\subsection{Token Purchase Transaction}
A token purchase transaction is simply a transaction that the user makes in order to receive token number $T$ that can be used for future transactions. A single token purchase transaction can be used to specify any number of tokens desired, as long as sufficient funds are available for the token. A single token purchase costs $y$ units. A token purchase transaction is considered successful, if it has been included as a part of the main chain. A user, or adversary cannot specify the token number they would like to purchase. Token numbers are issued independently, by a token issuing algorithm. If a user uses a valid token, to make a valid tokenized transaction, the token is considered to be spent, and the cost of purchasing that particular token $y$ is refunded back to the user. Each token can only be used once, but a token that has been purchased once never expires, and can be used in the future. 

\subsection{Tokenized Transaction}
A tokenized transaction for round $r$, $\tx_{u}[r],\tx_{a}[r] $ is a transaction that is accompanied by a valid token number $T_{u}[r],T_{a}[r]$ for the user and adversary respectively. A valid token number is one, that has been confirmed and included on the main chain in the previous rounds $r-1$. A user can only use valid tokens to make a tokenized transaction. Tokenized transactions are strictly ordered in ascending order, with transaction $\tx_{u}[r]$ being executed earlier than $\tx_{a}[r]$, if $T_{u}[r]<T_{a}[r]$. Further, all tokenized transactions precede non-tokenized transactions.  


The process of formation of the block now follows the following procedure:
\begin{itemize}
    \item  At the beginning of the round, a user can take the following actions:
    \begin{itemize}
        \item all users can make non-tokenized regular transactions or non-tokenized MEV transactions.
        \item  users can make the desired token purchase requests.
        \item users can use previously purchased tokens to make tokenized MEV transactions, or tokenized regular transansactions. 
    \end{itemize}
    \item All transactions are verified to ensure they are valid.
    \item The validator for round $r$ then collects $N$ valid tokenized transactions, and orders them based on token numbers. A tokenized transaction $\tx_{u}[r]$ is strictly ordered before $\tx_{a}[r]$ if $T_{u}[r]<T_{a}[r]$.
    \item The validator also collects non-tokenized transactions and creates a block based on the highest rewards that can be extracted.
    \item Finally, at the end of the round, the validator updates the state of the Blockchain and publishes the block.
\end{itemize}

It is important to note, here, that the token purchase transaction, like any other transaction, has the ability to get attacked. An adversary is able to frontrun an honest user's token purchase transaction, to get a lower token number than the user. 

\subsection{Rewards for the proposed protocol}
During round $r$, there are only $l$ tokens available that can be purchased. We assume a reasonable, fixed user policy $\pi_u$ and adversary policy $\pi_a$. We would like to find the maximum possible reward that can be achieved by the adversary in a fixed user policy scenario, which is strictly less than what the adversary gains in the current scenario. Let the user and adversary each purchase $\mathsf X_{u}[r], \mathsf X_{a}[r]$ number of tokens respectively. As each token costs $y$ units, the total costs incurred by them is $y\mathsf X_{u}[r], y\mathsf X_{a}[r]$ Let each block only allow $N$ tokenized transactions.

Let us define $\mathsf M_{u}[r]$ to be a function that defines whether an MEV transaction is made by a user or not i.e

\begin{equation}
\mathsf M_{u}[r] = \begin{cases}
1&\text{if MEV transaction is made}\\
0&\text{otherwise}
\end{cases}  
\end{equation}

Similarly, we define $\mathsf M_{a}[r]$ to be a function that defines whether an MEV transaction is attacked by an adversary or not i.e

\begin{equation}
\mathsf M_{a}[r] = \begin{cases}
1&\text{if MEV transaction is attacked}\\
0&\text{otherwise}
\end{cases}  
\end{equation}

Let us define $\mathsf F_r$ to be a function that defines whether an MEV transaction made by the user is frontrun or not i.e

\begin{equation}
\mathsf F_{r} = \begin{cases}
1&\text{if MEV transaction is frontrun}\\
0&\text{otherwise}
\end{cases}  
\end{equation}

The rewards to the user and the adversary $h_{u}[r],h_{a}[r]$ in round $r$ can be defined as follows:
\begin{equation}
h_{u}[r] = \begin{cases}
-y\mathsf X_{u}[r]+\eta-f\eta+y&\text{if MEV transaction is frontrun}\\
-y\mathsf X_{u}[r]+\eta+y&\text{otherwise}
\end{cases}  
\end{equation}
\begin{equation}
h_{a}[r] = \begin{cases}
-y\mathsf X_{a}[r]+f\eta+y&\text{if MEV transaction is frontrun}\\
-y\mathsf X_{a}[r]&\text{otherwise}
\end{cases}  
\end{equation}

Now, both the user and adversary start with an initial wealth $W_{u}[0], W_{a}[0]$. If, we assume that each block has only a single MEV transaction, after $R$ rounds, the maximum wealth that can be accumulated by both the user and the adversary from MEV transactions are:

\begin{align}
   W_u&=W_{u}[0]+\max\sum_{r=1}^{R}h_{u}[r]\notag\\
     &=W_{u}[0]+\max\sum_{r=1}^{R}\eta-f\eta\mathsf F_r+y\mathsf M_{u}[r]-y\mathsf X_{u}[r]
\end{align}
  
\begin{align}
   W_a&=W_{a}[0]+\max\sum_{r=1}^{R}h_{a}[r]\notag\\
     &=W_{a}[0]+\max\sum_{r=1}^{R}(f\eta+y)\mathsf F_r-y\mathsf X_{a}[r]
\end{align}

We now introduce a randomly chosen fixed user policy $\pi_u$, as shown in Algorithm 1. At the end of each round, a user purchases a single token, which depletes their wealth. When the threshold of this wealth is less than a small threshold $\tau$, the user begins to spend their tokens. This ensures that there is sufficient time for user to collect tokens to spend so that they are not frontrun by adversary tokens. Finally, the user will always take the MEV opportunity presented to them as long as they have an appropriate token. We assume that an adversary can simply frontrun any non-tokenized transactions. As a result, the user solely makes tokenized MEV transactions. Further, as the user is not aware of any tokens that are held by the adversary, they always use their lowest available token.

\begin{algorithm}
\caption{User Policy $\pi_u$}\label{Algorithm 1}
\begin{algorithmic}[1]
\State Inputs: $W_{u}[r],y,\tau$, $\tilde H_{u}[r]\subseteq \tilde{H}[r]$
\If{$W_{u}[r]>y$}
    \State $\mathsf X_{u}[r]=1$
\EndIf
\If{$W_{u}[r]\leq \tau$}
    \State $M_{u}[r]=1$
    \State $T_{u}[r]=\tilde H_{u}[r]$
\Else
    \State $M_{u}[r]=0$
    \State $T_{u}[r]=\infty$
\EndIf
\end{algorithmic}
\end{algorithm}

The adversary policy  $\pi_a$ on the other hand, is more powerful. As a worst case scenario, we assume that the adversary has full control over the token number assignment, and may reorder the token purchase transactions as they please. Thus, the adversary is aware of all tokens that the user has. We also allow the adversary to monitor user transactions, and the ability to frontrun these transactions. Since the user only makes tokenized MEV transactions, the adversary is able to frontrun these transactions, if they have tokens that are smaller in number than user tokens. This policy is detailed in Algorithm 2. The adversary can choose any policy, however we show in Section 5, that this indeed is the best policy that can be taken. 

\begin{algorithm}
\caption{Adversary Policy $\pi_a$}\label{Algorithm 2}
\begin{algorithmic}[1]
\State Inputs: $y, W_{a}[r], T_{u}[r], M_{u}[r], \tilde H_{a}[r]\subseteq \tilde{H}[r]$
\State $\mathsf X_{a}[r]=\lfloor\frac{W_{a}[r]}{y}\rfloor$
\If{$M_{u}[r]=1$}
    \If{$T_{u}[r]<\infty$}
    \State $T_{a}[r]=\max T\in \tilde H_{a}[r]
    $ s.t. $T<T_{u}[r]$
    \Else
    \State $T_{a}[r]=\infty$
\EndIf
    \State $M_{a}[r]=0$
    \State $T_{a}[r]=\infty$
\EndIf
\end{algorithmic}
\end{algorithm}


\section{Analysis}

To ease analysis, we divide time in to epochs as defined in the following. 
The first epoch begins at round $r=0$ and ends when the user has completed the initial token purchasing as in Algorithm~\ref{Algorithm 1}. 
Equivalently, the first epoch lasts until the user's wealth $W_u[r]$ drops below threshold $\tau$. 
Each subsequent epoch begins immediately after the epoch prior to it ends. 
An epoch ends when the following two conditions are satisfied: 
\begin{enumerate}
    \item the user has utilized all of the tokens purchased in the previous epoch; 
    \item the available wealth $W_u[r]$ of the user drops below $\tau$. 
\end{enumerate}
The above conditions lead to a well-defined notion of an epoch as the user always utilizes the earliest available token for a MEV transaction.
Tokens in one epoch, therefore, are completely utilized before tokens in the next epoch are utilized. 
We also define a terminal epoch in which parties utilize tokens from the previous epoch, but make no new token purchases during the epoch. 
The game ends after the terminal epoch. 
We assume the game lasts for $k > 0$ epochs. 
Restricting the game to $k$ epochs captures the intuition that in practice, a player is  interested in optimizing her  rewards over a fixed time horizon, e.g., the lifespan of the player, the next five years etc.
Note that our definition of an epoch is tied to the user's behavior described in Algorithm~\ref{Algorithm 1}. 
The definition is independent of the adversary's behavior. 

Let $\tilde{H}[e]$ be the set of tokens purchased by either the user or the adversary during epoch $e$. 
Further, let $\tilde{H}_u[e] \subseteq \tilde{H}[e]$ and $\tilde{H}_a[e] \subseteq \tilde{H}[e]$ be the tokens purchased by the user and adversary, respectively, during $e$. 
$\tilde{W}_a[e]$ and $\tilde{W}_u[e]$ denote, respectively, the total wealth of the adversary and user at the end of epoch $e$. 
The total wealth of a party includes the wealth available for spending and the funds locked up in the form of tokens. 
We call a policy $\pi_a$ followed by the adversary as balanced if the following holds true.

\begin{figure}
    \centering
    \includegraphics[width=0.9\linewidth]{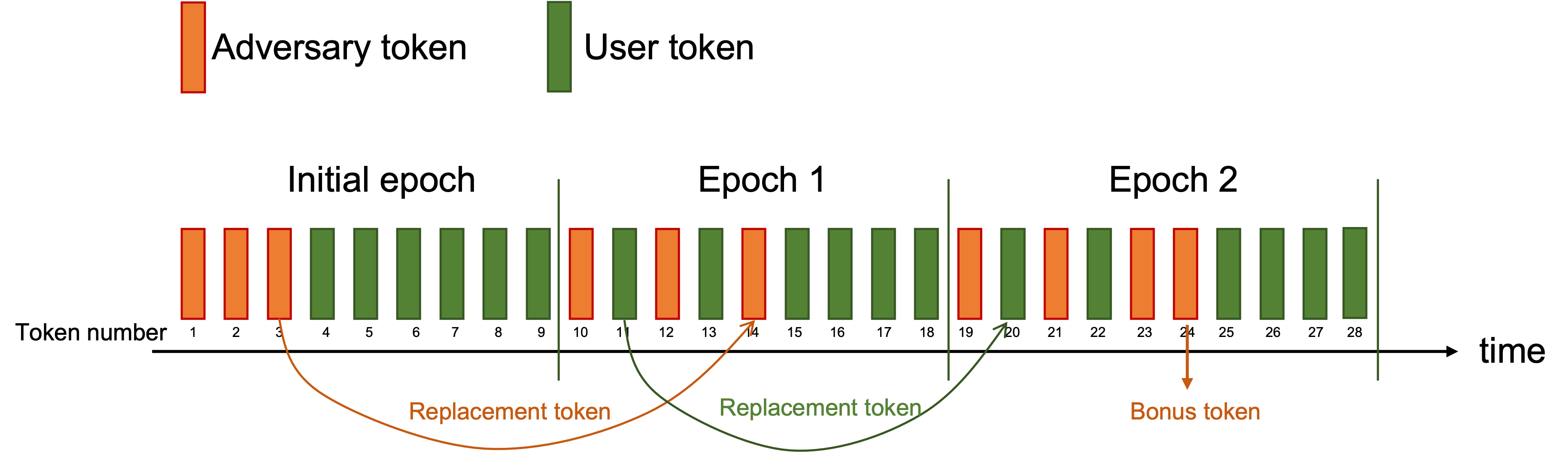}
     \caption{Illustration of token purchases with time.}
    \label{fig:time diagram}
\end{figure}

\begin{prop}[Balanced policy]
An adversary policy $\pi_a$ is balanced if for each non-terminal epoch $e$, there exists an injective mapping $m_e: \tilde{H}_a[e] \rightarrow \tilde{H}_u[e]$ such that $m_e(T) > T$ for all $T \in \tilde{H}_a[e]$ and $|\tilde{H}_a[e]| = \lfloor \tilde{W}_a[e]/y \rfloor$. 
\end{prop}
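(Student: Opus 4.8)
\section*{Proof proposal}

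The plan is to verify that the concrete adversary policy $\pi_a$ of Algorithm~2 satisfies the two defining requirements of balancedness for every non-terminal epoch $e$: the budget (counting) identity $|\tilde{H}_a[e]| = \lfloor \tilde{W}_a[e]/y \rfloor$, and the existence of an injective map $m_e \colon \tilde{H}_a[e] \to \tilde{H}_u[e]$ with $m_e(T) > T$ for all $T \in \tilde{H}_a[e]$. I would treat these as two independent obligations and dispatch them with different tools.

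First, the counting identity. Here I would track the adversary's wealth within an epoch. By line~2 of Algorithm~2, in every round the adversary converts as much liquid wealth as possible into tokens, purchasing $\mathsf{X}_a[r] = \lfloor W_a[r]/y \rfloor$ of them. I would argue that across a non-terminal epoch the adversary realizes no net reward: every token it spends is refunded its purchase price $y$, and in the balanced regime it extracts no MEV profit, so the total wealth $\tilde{W}_a[e]$ is conserved and, by the end of the epoch, is held entirely as (unspent) tokens with a liquid remainder below $y$. Summing the per-round purchases and cancelling refunds telescopically should then collapse to the single floor expression $\lfloor \tilde{W}_a[e]/y \rfloor$; the floor bookkeeping (what happens to the sub-$y$ remainder carried between rounds) is the one place that needs care.

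Second, the matching condition. This is a purely combinatorial statement about the two token sets once the epoch's purchases are fixed. Because Algorithm~2 assigns the adversary, whenever it attacks, the largest of its own tokens lying strictly below the user's token $T_u[r]$, each adversary token used for front-running sits just beneath a distinct user token. I would formalize the existence of $m_e$ via Hall's marriage theorem, or equivalently a greedy sweep from the smallest token upward: sort $\tilde{H}_a[e]$ and $\tilde{H}_u[e]$ and show that for every threshold $t$ the number of adversary tokens that are at least $t$ is dominated by the number of user tokens strictly exceeding $t$. Establishing this domination is where the adversary's control of token-number assignment, together with the structural fact that the user buys one token per round while the epoch is defined by the user's purchasing schedule, must be invoked.

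I expect the matching step to be the main obstacle. Unlike the counting identity, it couples the two players' purchase histories and requires ruling out the possibility that the adversary accumulates more low-numbered tokens than there are higher-numbered user tokens to match them against. I would anchor the argument on the wealth hypothesis $\sigma < 1$ and on the epoch definition, showing that within a non-terminal epoch the adversary's effective token budget cannot outstrip the user's supply of strictly larger tokens; once the Hall condition is reduced to this inequality, the injection $m_e$ follows, and combining it with the counting identity completes the verification that $\pi_a$ is balanced.
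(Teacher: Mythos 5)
There is a genuine gap, and it starts with a wrong premise. You claim that ``in the balanced regime [the adversary] extracts no MEV profit, so the total wealth $\tilde{W}_a[e]$ is conserved.'' This is backwards: the paper's definition of a balanced policy is precisely that the adversary \emph{can} use every purchased token to successfully front run a victim transaction, so in each non-terminal epoch it earns $f\eta$ per token and its wealth grows geometrically, $\tilde{W}_a[e] \approx \tilde{W}_a[e-1]\left(1 + \frac{f\eta}{y}\right)$ (this is exactly Lemma~1 of the paper). Conservation of adversary wealth would make the counting identity easy, but it also erases the central difficulty of the whole theorem: the adversary's front-running profits buy extra ``bonus'' tokens beyond mere replacements, and it is these bonus tokens that threaten to outpace the user's token supply and break the injective matching. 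Your counting argument is therefore built on a false invariant, and your matching argument never confronts the actual obstruction.

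On the matching step itself, you correctly reduce existence of $m_e$ to a Hall/domination condition, but you then defer proving that condition, anchoring it only on ``$\sigma < 1$'' and the epoch structure. That cannot suffice: the paper's result needs $\sigma < \frac{1}{2}$, $f < \frac{1-\sigma-\epsilon}{1+\sigma}$, $W_a[0] > \frac{y^2}{\eta\epsilon}$, and $\tau < \epsilon W_u[0]$, and without a smallness condition on $f$ the claim is simply false (large $f$ lets bonus-token purchases grow faster than user purchases, so no combinatorial bookkeeping can rescue the injection). The paper closes this gap by induction over epochs with a \emph{strengthened} hypothesis: besides epoch $e-1$ being balanced, it assumes (i) a tail surplus of at least $c\left\lceil \tilde{W}_a[e-1]/y \right\rceil$ user tokens with no preimage under $m_{e-1}$, where $c > \frac{f+\epsilon}{1-f-\epsilon}$, and (ii) the wealth-ratio invariant $\tilde{W}_a[e-1] < \sigma \tilde{W}_u[e-1]$; it then constructs $m_e$ explicitly (replacement tokens map to the user's replacement tokens, bonus tokens map into the unattacked tail after the last front run at round $r^*$) and re-establishes both invariants using the parameter conditions. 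Your plan treats each epoch in isolation, omits the induction, omits the tail-surplus invariant, and omits the hypotheses on $f$, $\epsilon$, $\tau$ entirely, so the step you yourself flag as ``the main obstacle'' is exactly the part of the proof that is missing.
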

With a balanced policy, an adversary can successfully utilize each purchased token  to front run some victim  transaction.  
The adversary also maximally utilizes its wealth to purchase tokens. 
We call a specific epoch $e$ as balanced if there exists an injective mapping $m_e: \tilde{H}_a[e] \rightarrow \tilde{H}_u[e]$ such that $m_e(T) > T$ for all $T \in \tilde{H}_a[e]$ and $|\tilde{H}_a[e]| = \lfloor \tilde{W}_a[e]/y \rfloor$. 
We now show that the adversary policy $\pi_a$ described in Algorithm~\ref{Algorithm 2} is balanced. 
\begin{theorem}
For $W_a[0] < \sigma W_u[0]$ where $0 < \sigma < \frac{1}{2}$ is a security parameter, $f < \frac{1-\sigma - \epsilon}{1+\sigma}$, and $W_a[0] > \frac{y^2}{\eta \epsilon}$ where $\epsilon < (yf + f^2 \eta)/(\eta(1-f))$ is a parameter and $\tau < \epsilon W_u[0]$ the adversary's policy $\pi_a$ as described in Algorithm~\ref{Algorithm 2} is balanced. 
\end{theorem}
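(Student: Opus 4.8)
The plan is to verify, for every non-terminal epoch $e$, the two conditions in the definition of a balanced policy: that the adversary buys the maximal number of tokens, $|\tilde{H}_a[e]| = \lfloor \tilde{W}_a[e]/y \rfloor$, and that there is an injective map $m_e: \tilde{H}_a[e] \to \tilde{H}_u[e]$ with $m_e(T) > T$ for all $T$. I would first dispose of the counting condition, which is nearly immediate from Algorithm~\ref{Algorithm 2}: since the rule sets $\mathsf{X}_a[r] = \lfloor W_a[r]/y \rfloor$, the adversary converts essentially all of its liquid wealth into tokens the moment that wealth is available, so it front-loads, spending its whole balance on tokens in the opening round(s) of the epoch and buying nothing further until tokens are redeemed. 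Combining this with the reward accounting of $h_a[r]$ (a redeemed token returns its cost $y$, and a successful front-run adds $f\eta$), and using the hypothesis $W_a[0] > y^2/(\eta\epsilon)$ to make the sub-$y$ floor remainder negligible against the token count, yields $|\tilde{H}_a[e]| = \lfloor \tilde{W}_a[e]/y \rfloor$.

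The substantive part is the matching $m_e$, for which I would exploit the token-number geometry. Tokens issued in earlier rounds carry strictly smaller numbers, and within a round the adversary, which by assumption controls assignment, places all of its tokens below the single user token. Because the adversary front-loads, its entire stock $\tilde{H}_a[e]$ occupies the smallest numbers of the epoch, whereas the user buys exactly one token per round across the whole buying phase (Algorithm~\ref{Algorithm 1}); hence every user token purchased from the second buying round onward strictly exceeds every adversary token. Since these ``high'' user tokens all dominate every element of $\tilde{H}_a[e]$, any injection into them automatically satisfies $m_e(T) > T$, so the existence of $m_e$ collapses to the single counting inequality $|\tilde{H}_a[e]| \le |\tilde{H}_u[e]| - 1$. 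The condition $\tau < \epsilon W_u[0]$ guarantees the buying phase is long enough that the user accumulates $\approx W_u[0]/y$ tokens, while $W_a[0] > y^2/(\eta\epsilon)$ controls the floor effects, so this inequality holds with room to spare in the first epoch, where $|\tilde{H}_a[e]| \approx W_a[0]/y < \sigma W_u[0]/y$ with $\sigma < \tfrac{1}{2}$.

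The main obstacle, and the step I would spend the most care on, is propagating the counting inequality across all epochs rather than just the first, since the parameter hypotheses constrain only the epoch-$1$ wealths. This is a coupled induction: assuming $\pi_a$ is balanced on epoch $e$ lets me evaluate the end-of-epoch wealths $\tilde{W}_a[e]$ and $\tilde{W}_u[e]$ from the reward recursions (the adversary collects $f\eta$ per front-run, the user nets $\eta$ or $\eta-f\eta$ and is replenished by refunds), and I must then show the resulting ratio still leaves $|\tilde{H}_a[e+1]|$ below the count of high user tokens in epoch $e+1$. The algebraic conditions $f < (1-\sigma-\epsilon)/(1+\sigma)$ and $\epsilon < (yf + f^2\eta)/(\eta(1-f))$ are exactly the inequalities that force the one-epoch wealth map to keep the adversary-to-user ratio inside the $\sigma < \tfrac{1}{2}$ regime (indeed to contract it), which is what sustains the matching condition epoch over epoch. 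Verifying that these two inequalities really do close the induction, i.e.\ that the wealth recursion contracts the ratio under the stated bounds, is where the genuine work lies; the within-round ordering argument and the trivial dominating injection are comparatively routine once the inequality is established.
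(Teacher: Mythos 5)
Your reduction of the matching condition to a pure counting inequality contains a genuine gap: the ``front-loading'' geometry you invoke holds only in the \emph{initial} epoch, not in the epochs where the real difficulty lies. At the start of any non-initial epoch $e$, essentially all of the adversary's wealth is locked inside the tokens it bought during epoch $e-1$; it regains liquidity only when it redeems a token by front running a user transaction, and those front runs occur round by round throughout epoch $e$. By the model, after each front run the adversary buys its replacement/bonus tokens and \emph{then} the user buys its own replacement tokens, so the purchases in $\tilde{H}_a[e]$ and $\tilde{H}_u[e]$ are interleaved across the whole epoch: the adversary's tokens bought after its $i$-th front run carry \emph{larger} numbers than the user's tokens bought after front runs $1,\dots,i-1$. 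Hence your claim that ``every user token purchased from the second buying round onward strictly exceeds every adversary token'' is false for $e \geq 1$, and the existence of $m_e$ does not collapse to $|\tilde{H}_a[e]| \le |\tilde{H}_u[e]| - 1$. Relatedly, you misattribute the hypothesis $\epsilon < (yf+f^2\eta)/(\eta(1-f))$ to the wealth-ratio contraction; in fact it is needed precisely for the matching step that your argument has trivialized away.

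What a correct argument needs (and what the paper's induction does) is an injection that respects this interleaving: each adversary \emph{replacement} token is mapped to the user's replacement token bought immediately after the same front run (larger because the adversary buys first within the round), but the adversary's \emph{bonus} tokens, roughly $\lfloor \lfloor \tilde{W}_a[e-1]/y \rfloor f\eta/y \rfloor$ of them, have nowhere to go among the interleaved purchases and must be mapped to user tokens bought after the round $r^*$ of the last front run. Guaranteeing that enough such ``tail'' user tokens exist requires strengthening the induction hypothesis beyond ``balanced plus wealth ratio'': the paper carries the invariant~\eqref{eq:lowerboundontailend} that at least $c\lceil \tilde{W}_a[e-1]/y\rceil$ user tokens at the end of each epoch have no preimage under $m_{e-1}$, with $c > (f+\epsilon)/(1-f-\epsilon)$, i.e.\ that many user transactions escape attack and fund the tail purchases of the next epoch. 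The condition on $\epsilon$ is exactly what makes the bonus-token count fit below this tail count, while $f < (1-\sigma-\epsilon)/(1+\sigma)$ drives the step $\tilde{W}_a[e] < \sigma\tilde{W}_u[e]$. Your coupled induction, which tracks only the wealth ratio, cannot close without this additional invariant.
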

\begin{proof}
The proof is by induction. 
During the initial epoch, the adversary purchases $\lfloor W_a[0]/y \rfloor$ tokens first following which the user purchases $\lceil (W_u[0]-\tau)/y \rceil$ tokens as per the adversary policy $\pi_a$ and user policy $\pi_u$ in Algorithm~\ref{Algorithm 1}.
By assumptions in Theorem, $W_a[0] < \sigma W_u[0] < \frac{W_u[0]}{2}$ and therefore $\lfloor W_a[0]/y \rfloor < \lceil (W_u[0]-\tau)/y \rceil$. 
The injective mapping $m_0: \tilde{H}_a[0] \rightarrow \tilde{H}_u[0]$ can simply be $m_0(i) = \lfloor W_a[0]/y \rfloor + i$ for all $1 \leq i \leq \lfloor W_a[0]/y \rfloor$.
The total wealth of the user $\tilde{W}_u[0]$ and adversary $\tilde{W}_a[0]$ at the end of the initial epoch are still $W_u[0]$ and $W_a[0]$ respectively. 

Now, consider an epoch $e$ that is not the initial or the terminal epoch. 
Supposing the previous epoch $e-1$ is balanced, i.e.,  satisfies the property that there exists an injective function $m_{e-1}: \tilde{H}_a[e-1] \rightarrow \tilde{H}_u[e-1]$ with $m_{e-1}(T) > T$ for all $T \in \tilde{H}_a[e-1]$.
Also suppose the mapping $m_{e-1}$ is such that there are at least $c \lceil \tilde{W}_a[e-1]/y \rceil$ user tokens at the end of epoch $e-1$ without a preimage in $m_{e-1}$ where $c>(f+\epsilon)/(1-f-\epsilon)$, i.e., 
\begin{align}
|\{ T' \in \tilde{H}_u[e-1]: T' > \max_{T: T\in \tilde{W}_a[e-1]} m_{e-1}(T) \}| > c\left\lceil \frac{\tilde{W}_a[e-1]}{y} \right\rceil.  \label{eq:lowerboundontailend}
\end{align}
Equation~\eqref{eq:lowerboundontailend} implies there exist at least $c \lceil \tilde{W}_a[e-1]/y \rceil$ user transactions that are not front run in the end of epoch $e$. 
Further, suppose that the total wealth of the parties are such that $\tilde{W}_a[e-1] < \sigma \tilde{W}_u[e-1]$. 
We will show that there exists an injective function $m_e: \tilde{H}_a[e] \rightarrow \tilde{H}_u[e]$ with $m_e(T) > T$ for all $T \in \tilde{H}_a[e]$ and $\tilde{W}_a[e] < \sigma \tilde{W}_u[e]$. 

There are $\lfloor \tilde{W}_a[e-1]/y \rfloor$ tokens purchased by the adversary during epoch $e-1$. 
Since epoch $e-1$ is balanced, during epoch $e$ the adversary can use each of its tokens in $e-1$ to successfully front run transactions. 
After each successful front running attack, the adversary replaces its used token in epoch $e-1$ by purchasing a new token in epoch $e$.
We call such a token the adversary purchases as a {\em replacement} token. 
Additionally, the adversary can use the profits gained from front running to purchase tokens whenever the adversary's balance exceeds $y$. 
We call such a token as a {\em bonus} token. 
Note that after each transaction the user submits, the user also purchases a replacement token and possibly bonus tokens. 
We define replacement token and bonus token for the user analogous to those for the adversary. 
Per our model, the replacement and bonus tokens of the adversary are purchased before the replacement and bonus tokens of the user after each front running attack. 

To construct an injective mapping for epoch $e$, first note that the replacement token purchased by the adversary after each successful front run can be mapped to the replacement token purchased by the user after the front run. 
Thus, $\lfloor \tilde{W}_a[e-1] / y \rfloor$ of the adversary's tokens can be mapped to higher tokens in epoch $e$. 
However, in addition to replacement tokens the adversary also purchases $\lfloor \lfloor \tilde{W}_a[e-1] / y \rfloor f\eta / y \rfloor$ bonus tokens in epoch $e$, which need to be mapped. 
For this, consider round $r^*$ when the last front run attack is performed by the adversary in epoch $e$. 
We observe that, by policy $\pi_a$ all of the adversary's bonus tokens are purchased by round $r^* + 1$ in epoch $e$ (Figure~\ref{fig:time diagram}).
Transactions submitted by the user after round $r^*$ are not attacked. 
By assumption, at least $\lceil \tilde{W}_a[e-1]/y \rceil$ transactions are submitted by the user after round $r^*+1$.
Each of those transactions generate a reward of $\eta$ for the user. 
Therefore, the user is able to purchase at least $\lfloor \lceil \tilde{W}_a[e-1]/y\rceil (y + \eta)/y \rfloor$ tokens at the end of epoch $e$. 
To complete our mapping, we map the $\lfloor \lfloor \tilde{W}_a[e-1] / y \rfloor f\eta / y \rfloor$ bonus tokens of the adversary to the first $\lfloor \lfloor \tilde{W}_a[e-1] / y \rfloor f\eta / y \rfloor$ user tokens purchased after round $r^*$. 
This is possible since 
\begin{align}
\epsilon &< \frac{yf + f^2\eta}{\eta(1-f)} 
 \Rightarrow \frac{\eta \epsilon  + f\eta }{y + \eta} < \frac{f}{1-f} 
\Rightarrow \frac{\frac{y^2}{\tilde{W}_a[e-1]} + f\eta }{y + \eta} 
 < \frac{f}{1-f} \notag \\
\Rightarrow \frac{\frac{y^2}{\tilde{W}_a[e-1]} + f\eta }{y + \eta} &< c  
\Rightarrow y^2 < \tilde{W}_a[e-1](c y + c \eta-f\eta)\notag 
\end{align}
\begin{align}
\Rightarrow \tilde{W}_a[e-1] f\eta &< c \tilde{W}_a[e-1](y + \eta) - y^2 
\Rightarrow  \frac{\tilde{W}_a[e-1] f\eta}{y^2}  <   \frac{c \tilde{W}_a[e-1] (y + \eta)}{y^2} - 1 \notag \\
\Rightarrow \lfloor \lfloor \tilde{W}_a[e-1] / y \rfloor f\eta / y \rfloor &< \lfloor c \lceil \tilde{W}_a[e-1]/y\rceil (y + \eta)/y \rfloor. 
\end{align}
Thus, we have showed the existence of an injective mapping $m_e: \tilde{H}_a[e] \rightarrow \tilde{H}_u[e]$ with $m_e(T) > T$ for all $T \in \tilde{H}_a[e]$. 

The number of user tokens at the end of epoch $e$ who which a preimage on $m_e$ does not exist is at least $\lfloor \lceil \tilde{W}_a[e-1]/y\rceil (y + \eta)/y \rfloor - \lfloor \lfloor \tilde{W}_a[e-1] / y \rfloor f\eta / y \rfloor$. 
In the following we show that this quantity is at least $\lceil \tilde{W}_a[e]/y \rceil$. 
We have
\begin{align}
\frac{f + \epsilon}{1- f - \epsilon} &< c \notag \\
\Rightarrow \frac{f + \frac{y^2}{\tilde{\eta W_a[e-1]}}}{1-f - \frac{ y^2}{\eta \tilde{W}_a[e-1]}} &< c \notag  
\end{align}
\begin{align}
\Rightarrow f\eta + \frac{y^2}{\tilde{W_a[e-1]}} &< c\eta(1-f) -  \frac{c y^2}{\tilde{W}_a[e-1]}\notag \\ 
\Rightarrow cy + c f\eta + \frac{c y^2}{\tilde{W}_a[e-1]} &< c y + c \eta - \frac{y^2}{\tilde{W_a[e-1]}} - f\eta \notag 
\end{align}
\begin{align}
\Rightarrow c\frac{\tilde{W}_a[e-1]}{y}\left(1 + \frac{f\eta}{y} \right) + c &< \frac{c \tilde{W}_a[e-1](y+\eta)}{y^2} - 1 - \frac{\tilde{W}_a[e-1]f\eta}{y^2} 
 \notag \\ 
\Rightarrow c\frac{\tilde{W}_a[e]}{y} + c &< \frac{c \tilde{W}_a[e-1](y+\eta)}{y^2} - 1 - \frac{\tilde{W}_a[e-1]f\eta}{y^2} \notag  \\
\Rightarrow c \tilde{W}_a[e]/y + c &< c \lceil  \tilde{W}_a[e-1]/y\rceil (y + \eta)/y - 1 -\lfloor \tilde{W}_a[e-1] / y \rfloor f\eta / y \notag \\ 
\Rightarrow c \lceil \tilde{W}_a[e]/y \rceil &<  \lfloor c \lceil \tilde{W}_a[e-1]/y\rceil (y + \eta)/y \rfloor - \lfloor \lfloor \tilde{W}_a[e-1] / y \rfloor f\eta / y \rfloor 
\end{align}

It only remains to show that $\tilde{W}_a[e] < \sigma \tilde{W}_u[e]$. 
We have, 
\begin{align}
\frac{\tilde{W}_a[e-1] + \left\lfloor \frac{\tilde{W}_a[e-1]}{y} \right\rfloor f \eta}{\tilde{W}_u[e-1] + \left\lfloor \frac{\tilde{W}_a[e-1]}{y} \right\rfloor (1-f) \eta + \left( \left\lceil \frac{\tilde{W}_u[e-1] - \tau}{y} \right\rceil - \left\lfloor \frac{\tilde{W}_a[e-1]}{y} \right\rfloor \right) \eta } < \sigma \notag \\
\Leftarrow \frac{\tilde{W}_a[e-1] + \left\lfloor \frac{\tilde{W}_a[e-1]}{y} \right\rfloor f \eta}{\tilde{W}_u[e-1] + \left\lfloor \frac{\tilde{W}_a[e-1]}{y} \right\rfloor (1-f) \eta + \left( \left\lceil \frac{\tilde{W}_u[e-1] - \tau}{y} \right\rceil - \left\lfloor \frac{\tilde{W}_a[e-1]}{y} \right\rfloor \right) \eta } < \frac{\tilde{W}_a[e-1]}{\tilde{W}_u[e-1]} \notag \\
\Leftarrow \frac{\tilde{W}_a[e-1] + \left\lfloor \frac{\tilde{W}_a[e-1]}{y} \right\rfloor f \eta}{\tilde{W}_a[e-1]} < \frac{\tilde{W}_u[e-1] + \left\lfloor \frac{\tilde{W}_a[e-1]}{y} \right\rfloor (1-f) \eta + \left( \left\lceil \frac{\tilde{W}_u[e-1] - \tau}{y} \right\rceil - \left\lfloor \frac{\tilde{W}_a[e-1]}{y} \right\rfloor \right) \eta}{\tilde{W}_u[e-1]} \notag 
\end{align}
\begin{align}
\Leftarrow 
\frac{\tilde{W}_a[e-1] +  \frac{\tilde{W}_a[e-1]}{y}  f \eta}{\tilde{W}_a[e-1]} < \frac{\tilde{W}_u[e-1] + \left\lfloor \frac{\tilde{W}_a[e-1]}{y} \right\rfloor (1-f) \eta + \left(  \frac{\tilde{W}_u[e-1] - \tau}{y}  -  \frac{\tilde{W}_a[e-1]}{y}  \right) \eta}{\tilde{W}_u[e-1]} \notag \\
\Leftarrow 1 + \frac{f\eta }{y} < 1 + \left\lfloor \frac{\tilde{W}_a[e-1]}{y} \right\rfloor \frac{(1-f) \eta}{\tilde{W}_u[e-1]} + \frac{1-\frac{\tau}{\tilde{W}_u[e-1]}}{y}\eta - \frac{\tilde{W}_a[e-1]}{\tilde{W}_u[e-1] y} \eta \notag \\
\Leftarrow \frac{f\eta }{y} + \frac{\sigma \eta }{y}  < \left\lfloor \frac{\tilde{W}_a[e-1]}{y} \right\rfloor \frac{(1-f) \eta}{\tilde{W}_u[e-1]} + \frac{1-\frac{\tau}{\tilde{W}_u[e-1]}}{y}\eta \notag \\
\Leftarrow \frac{f\eta }{y} + \frac{\sigma \eta }{y} < \left( \frac{\tilde{W}_a[e-1]}{y} - 1 \right) \frac{(1-f) \eta}{\tilde{W}_u[e-1]} + \frac{1-\frac{\tau}{\tilde{W}_u[e-1]}}{y}\eta \notag 
\end{align}
\begin{align}
\Leftarrow f + \sigma < \frac{(\tilde{W}_a[e-1] - y)(1-f)}{\tilde{W}_u[e-1]} + 1-\frac{\tau}{\tilde{W}_u[e-1]} \notag \\
\Leftarrow f\left(1 + \frac{\tilde{W}_a[e-1]-y}{\tilde{W}_u[e-1]} \right) < \frac{\tilde{W}_a[e-1]-y}{\tilde{W}_u[e-1]} - \sigma + 1-\frac{\tau}{\tilde{W}_u[e-1]} \notag \\
\Leftarrow f \left(1 + \sigma - \frac{y}{\tilde{W}_u[e-1]} \right) < \frac{\tilde{W}_a[e-1]-y}{\tilde{W}_u[e-1]} - \sigma + 1-\frac{\tau}{\tilde{W}_u[e-1]} \notag \\
\Leftarrow f \left(1 + \sigma - \frac{y}{\tilde{W}_u[e-1]} \right)  < - \sigma + 1-\frac{\tau}{\tilde{W}_u[e-1]} \notag \\
\Leftarrow f < \frac{1 - \sigma - \frac{\tau}{\tilde{W}_u[e-1]}}{1 + \sigma - \frac{y}{\tilde{W}_u[e-1]}} \notag \\
\Leftarrow f < \frac{1 - \sigma - \epsilon}{1 + \sigma}
\end{align}
where we have used $\tilde{W}_a[e-1] > y^2/(\eta \epsilon) > y$ and $\tau < \epsilon \tilde{W}_u[e-1] $.

The proof  follows by induction. 
\end{proof}

\begin{lem}
The total wealth of an adversary, following $\pi_a$, after $k$ epochs is upper bounded as 
\begin{align}
\tilde{W}_a[k] \leq W_a[0]\left(1 + \frac{f\eta}{y} \right)^k. 
\end{align}
\label{lem:advwealthub}
\end{lem}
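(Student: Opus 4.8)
The plan is to establish the bound by induction on the epoch index $e$, exploiting the per-epoch wealth-growth recursion that the balanced property of $\pi_a$ (just proved in the theorem) forces upon the adversary. The core fact is that in each epoch the adversary's total wealth grows multiplicatively by a factor of at most $1 + f\eta/y$; iterating this factor over the $k$ epochs yields precisely the stated bound.

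First I would pin down the recursion for $\tilde{W}_a[e]$. Because $\pi_a$ is balanced, epoch $e-1$ leaves the adversary holding $\lfloor \tilde{W}_a[e-1]/y \rfloor$ tokens, and the injective mapping $m_{e-1}$ guarantees that each of these tokens front runs a distinct victim transaction during epoch $e$. Each successful front run nets a profit of $f\eta$: the cost $y$ of the spent token is refunded on use and immediately re-spent on a replacement token, while any bonus tokens are merely liquid front-running profits converted into locked token wealth, so neither affects the total. Summing over all $\lfloor \tilde{W}_a[e-1]/y \rfloor$ attacks gives
\begin{align*}
\tilde{W}_a[e] = \tilde{W}_a[e-1] + \left\lfloor \frac{\tilde{W}_a[e-1]}{y} \right\rfloor f\eta.
\end{align*}

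Next I would drop the floor, using $\lfloor \tilde{W}_a[e-1]/y \rfloor \le \tilde{W}_a[e-1]/y$, to obtain the clean multiplicative bound
\begin{align*}
\tilde{W}_a[e] \le \tilde{W}_a[e-1]\left(1 + \frac{f\eta}{y}\right).
\end{align*}
Applying this inequality inductively, starting from the initial epoch where (as noted in the theorem's proof) $\tilde{W}_a[0] = W_a[0]$, and observing that the same front-running bookkeeping holds in the terminal epoch, gives $\tilde{W}_a[k] \le W_a[0](1 + f\eta/y)^k$.

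The main obstacle is conceptual rather than computational: correctly reading off from the balanced property that exactly $\lfloor \tilde{W}_a[e-1]/y \rfloor$ successful front runs occur per epoch, and checking the token-purchase/refund accounting so that the net per-attack change in total wealth is exactly $f\eta$. Once the recursion is isolated, the geometric bound is immediate.
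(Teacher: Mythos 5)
Your proof is correct and follows essentially the same route as the paper's: both derive the per-epoch recursion $\tilde{W}_a[e] = \tilde{W}_a[e-1] + \lfloor \tilde{W}_a[e-1]/y \rfloor f\eta$ from the balanced property, drop the floor to get the multiplicative factor $1 + f\eta/y$, and iterate over $k$ epochs. If anything, your version is slightly more careful, since you make explicit the token-refund accounting behind the recursion, a point the paper leaves implicit (its own proof even contains an unresolved ``under what assumptions?'' remark at exactly this step).
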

\begin{proof}
The total wealth of the adversary at the end of the initial epoch, $\tilde{W}_a[0]$, is $W_a[0]$. 
Under policy $\pi_a$, the adversary is able to front run $\lfloor \tilde{W}_a[0]/y \rfloor$ many user transactions in the first epoch. 
Therefore, the total wealth of the adversary at the end of the first epoch is (under what assumptions?)  
\begin{align}
\tilde{W}_a[1] = W_a[0] + \left\lfloor \frac{W_a[0]}{y} \right\rfloor f \eta \leq
W_a[0]\left( 1 +  \frac{f \eta}{y} \right).  \label{eq:waupboundsdfx}
\end{align}
From Equation~\eqref{eq:waupboundsdfx}, the total wealth of the adversary at the end of the second epoch is at most 
\begin{align}
\tilde{W}_a[2] \leq W_a[0] \left( 1 + \frac{f \eta}{y} \right) + \left\lfloor \frac{W_a[0] \left( 1 + \frac{f \eta}{y} \right)}{y} \right\rfloor f \eta \leq W_a[0] \left(1 + \frac{f\eta}{y} \right)^2.  \label{eq:totwealthadv2ndepoc}
\end{align}
Continuing this way, the total wealth of the adversary at the of the $k$-th epoch is at most 
\begin{align}
\tilde{W}_a[k] \leq W_a[0] \left(1 + \frac{f \eta}{y} \right)^k. 
\end{align}
\end{proof}

\begin{lemma}
The total wealth of the user at the end of $k$ epochs is lower bounded as 
\begin{align}
\tilde{W}_u[k] \geq W_u[0]\left( 1+\frac{\eta}{y} - \frac{\sigma f \eta}{y}\right)^k -  \frac{\tau \eta}{y} \left( \frac{\left( 1 + \frac{\eta}{y} - \frac{\sigma f \eta}{y}\right)^k - 1}{\left(  \frac{\eta}{y} - \frac{\sigma f \eta }{y}\right)} \right).
\end{align}
\label{lem:userwealtlb}
\end{lemma}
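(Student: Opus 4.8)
The plan is to reduce the statement to a single first-order linear recurrence in the user's end-of-epoch wealth and then solve that recurrence by induction on the epoch index. The natural starting point is the exact one-step wealth update for the user, which is precisely the quantity appearing in the denominator of the $\sigma$-ratio chain in the proof of Theorem~1: during epoch $e$ the user spends the tokens it purchased in epoch $e-1$, making $\lceil (\tilde{W}_u[e-1]-\tau)/y \rceil$ MEV transactions, of which exactly $\lfloor \tilde{W}_a[e-1]/y \rfloor$ are front run (this count is guaranteed by the balanced property of $\pi_a$ established in Theorem~1). Since each transaction nets the user $\eta$ and each front run costs an additional $f\eta$, this gives
\begin{align}
\tilde{W}_u[e] = \tilde{W}_u[e-1] + \left\lceil \frac{\tilde{W}_u[e-1]-\tau}{y} \right\rceil \eta - \left\lfloor \frac{\tilde{W}_a[e-1]}{y} \right\rfloor f\eta.
\end{align}

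Next I would lower-bound the right-hand side using $\lceil x \rceil \ge x$ on the (positive) reward term, $\lfloor x \rfloor \le x$ on the (negative) front-run term, and the invariant $\tilde{W}_a[e-1] < \sigma \tilde{W}_u[e-1]$ that Theorem~1 maintains across non-terminal epochs. Substituting these yields the clean linear recurrence
\begin{align}
\tilde{W}_u[e] \ge \left(1 + \frac{\eta}{y} - \frac{\sigma f \eta}{y}\right) \tilde{W}_u[e-1] - \frac{\tau \eta}{y},
\end{align}
that is, $\tilde{W}_u[e] \ge a\,\tilde{W}_u[e-1] - b$ with $a = 1 + \eta/y - \sigma f \eta/y$ and $b = \tau\eta/y$. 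Note $a>1$ since $\sigma f < 1$, so multiplying the inductive hypothesis by $a$ preserves the inequality direction. Unrolling from the base case $\tilde{W}_u[0] = W_u[0]$ then gives $\tilde{W}_u[k] \ge a^k W_u[0] - b\sum_{i=0}^{k-1} a^i = a^k W_u[0] - b\,(a^k-1)/(a-1)$, and because $a-1 = \eta/y - \sigma f\eta/y$ is exactly the denominator in the claimed expression, resubstituting $a$ and $b$ reproduces the stated bound verbatim.

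The main obstacle is not the algebra, which is the routine closed-form solution of a geometric recurrence, but justifying the one-step update and in particular its two integer counts. The transaction count $\lceil (\tilde{W}_u[e-1]-\tau)/y \rceil$ rests on the epoch bookkeeping, namely that the tokens purchased in epoch $e-1$ are exactly those consumed in epoch $e$ and that the user buys one token per round until its spendable wealth drops below $\tau$; the front-run count $\lfloor \tilde{W}_a[e-1]/y \rfloor$ is exactly the balanced-policy guarantee of Theorem~1. I would also take care that each relaxation pushes $\tilde{W}_u[e]$ downward: $\lceil\cdot\rceil \ge x$ only shrinks the reward term, while $\lfloor\cdot\rfloor \le x$ combined with $\tilde{W}_a[e-1] < \sigma\tilde{W}_u[e-1]$ only inflates the loss term, so the resulting chain remains a genuine lower bound. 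A final point to check is the terminal epoch, in which no new tokens are purchased; since token costs are refunded on use and locked funds still count toward total wealth, the terminal epoch only adds reward without altering the update's form, so the same recurrence governs all $k$ epochs and the induction closes.
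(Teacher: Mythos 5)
Your proof is correct and takes essentially the same route as the paper's: both reduce the claim to the linear recurrence $\tilde{W}_u[e] \geq \left(1+\frac{\eta}{y}-\frac{\sigma f \eta}{y}\right)\tilde{W}_u[e-1] - \frac{\tau\eta}{y}$ using the same ceiling/floor relaxations and the invariant $\tilde{W}_a[e-1] < \sigma \tilde{W}_u[e-1]$, and then unroll the geometric recursion to the stated closed form. The only cosmetic difference is how the one-step inequality is obtained: the paper uses total-wealth conservation together with the adversary wealth upper bound of Lemma~\ref{lem:advwealthub}, whereas you count the user's per-transaction gains and front-run losses directly --- equivalent bookkeeping, since the adversary's epoch gain is exactly the user's front-run loss.
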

\begin{proof}
Next, the total wealth of the user at the end of the initial epoch is $W_u[0]$. 
During the first epoch, the user makes $\left\lceil \frac{W_u[0]-\tau}{y} \right\rceil$ transactions bringing $\left\lceil \frac{W_u[0]-\tau}{y} \right\rceil \eta$ additional value in to the system. 
From Equation~\eqref{eq:waupboundsdfx}, the total wealth of the user at the end of the first epoch is 
\begin{align}
\tilde{W}_u[1] &\geq W_u[0] + W_a[0] + \left\lceil \frac{W_u[0]-\tau}{y} \right\rceil \eta - W_a[0]\left(1 + \frac{f\eta}{y} \right) \notag \\
&\geq W_u[0] + W_a[0] + \left( \frac{W_u[0]-\tau}{y} \right) \eta - W_a[0]\left(1 + \frac{f\eta}{y} \right) \notag \\
&= W_u[0]\left(1 + \frac{\eta}{y} \right) - W_a[0] \frac{f\eta}{y} - \frac{\tau \eta}{y} \notag \\
&\geq W_u[0]\left(1 + \frac{\eta}{y}\right) - W_u[0]\sigma \frac{f\eta}{y} - \frac{\tau \eta}{y} \notag \\
&= W_u[0]\left(1+\frac{\eta}{y} - \frac{\sigma f \eta}{y}\right) - \frac{\tau\eta}{y}.\label{eq:tildew1lb42}
\end{align}
Similarly, the total wealth of the user at the end of the second epoch is 
\begin{align}
\tilde{W}_u[2] &\geq \tilde{W}_u[1] + \tilde{W}_a[1] + \left\lceil \frac{\tilde{W}_u[1]-\tau}{y} \right\rceil \eta  - \tilde{W}_a[1]  \left( 1 + \frac{f\eta}{y} \right)\notag \\
&\geq \tilde{W}_u[1]\left( 1+\frac{\eta}{y} - \frac{\sigma f \eta}{y}\right) - \frac{\tau\eta}{y} \notag \\
&\geq W_u[0] \left( 1+\frac{\eta}{y} - \frac{\sigma f \eta}{y}\right)^2 - \frac{\tau \eta}{y}\left( 1+\frac{\eta}{y} - \frac{\sigma f \eta}{y} + 1 \right).
\end{align}
Repeating the above for $k$ times, we have 
\begin{align}
\tilde{W}_u[k] \geq W_u[0]\left( 1+\frac{\eta}{y} - \frac{\sigma f \eta}{y}\right)^k -  \frac{\tau \eta}{y} \left( \frac{\left( 1 + \frac{\eta}{y} - \frac{\sigma f \eta}{y}\right)^k - 1}{\left(  \frac{\eta}{y} - \frac{\sigma f \eta }{y}\right)} \right).
\end{align}
\end{proof}

\begin{theorem}
Over epochs, the percentage of user transactions front run in each epoch goes to zero.
\end{theorem}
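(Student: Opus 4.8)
The plan is to bound the fraction of user transactions front run in a given epoch by the wealth ratio $\tilde{W}_a[e-1]/\tilde{W}_u[e-1]$, and then to feed in the two preceding lemmas to show this ratio decays geometrically. First I would fix a non-terminal epoch $e$ and count the two relevant quantities. Because the adversary's policy $\pi_a$ is balanced (the first Theorem), the adversary enters epoch $e$ holding exactly $\lfloor \tilde{W}_a[e-1]/y \rfloor$ tokens, each of which front runs a distinct user transaction; so the number of front-run transactions in epoch $e$ equals $\lfloor \tilde{W}_a[e-1]/y \rfloor$. On the other hand, by Algorithm~\ref{Algorithm 1} the user keeps buying tokens during epoch $e-1$ until its liquid wealth falls below $\tau$, and all of these tokens are spent (as tokenized MEV transactions) during epoch $e$. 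Since the value locked in unspent user tokens at the end of epoch $e-1$ is $y\,|\tilde{H}_u[e-1]|$ and the remaining liquid wealth is at most $\tau$, we get $|\tilde{H}_u[e-1]| \geq (\tilde{W}_u[e-1]-\tau)/y$. Hence the fraction front run in epoch $e$ is at most
\begin{align}
\frac{\lfloor \tilde{W}_a[e-1]/y \rfloor}{|\tilde{H}_u[e-1]|} \leq \frac{\tilde{W}_a[e-1]}{\tilde{W}_u[e-1]-\tau}.
\end{align}

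Next I would substitute the bounds of Lemma~\ref{lem:advwealthub} and Lemma~\ref{lem:userwealtlb} evaluated at index $e-1$. Writing $\mu = 1 + f\eta/y$ for the adversary's growth rate and $\rho = 1 + (\eta/y)(1-\sigma f)$ for the user's, the numerator is at most $W_a[0]\mu^{e-1}$. Using $\rho - 1 = (\eta/y)(1-\sigma f)$ to simplify the correction term in Lemma~\ref{lem:userwealtlb}, the denominator satisfies
\begin{align}
\tilde{W}_u[e-1]-\tau \geq \rho^{e-1}\left( W_u[0] - \frac{\tau}{1-\sigma f}\right) + \frac{\tau}{1-\sigma f} - \tau.
\end{align}
The standing assumption $\tau < \epsilon W_u[0]$ (with $\epsilon$ small relative to $1-\sigma f$, which is itself bounded below by $1/2$ since $\sigma f < 1/2$) makes the coefficient $W_u[0] - \tau/(1-\sigma f)$ strictly positive, so the denominator is $\Theta(\rho^{e-1})$ and the front-run fraction in epoch $e$ is bounded by a constant multiple of $(\mu/\rho)^{e-1}$.

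It remains to compare the two rates. We have $\mu < \rho$ exactly when $f\eta/y < (\eta/y)(1-\sigma f)$, i.e. when $f(1+\sigma) < 1$. This is implied by the hypothesis $f < (1-\sigma-\epsilon)/(1+\sigma)$ inherited from the first Theorem, since $(1-\sigma-\epsilon)/(1+\sigma) < 1/(1+\sigma)$. Therefore $\mu/\rho < 1$ strictly, the bound $(\mu/\rho)^{e-1} \to 0$ as $e \to \infty$, and the fraction of user transactions front run in each epoch tends to zero, as claimed.

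The step I expect to be the main obstacle is not the geometric comparison, which is immediate once $\mu$ and $\rho$ are identified, but the bookkeeping that ties the front-run fraction to the wealth ratio uniformly in $e$. In particular, one must argue carefully that the number of user transactions in epoch $e$ really is $\Theta(\tilde{W}_u[e-1]/y)$ despite the floor/ceiling rounding in the token counts and the additive $\tau$ slack, and one must check that the subtracted correction term of Lemma~\ref{lem:userwealtlb} stays strictly below the leading geometric term for every $e$ (equivalently, that $\tau$ is chosen small enough that the user's wealth genuinely grows rather than being eroded). Controlling these lower-order terms simultaneously across all epochs is the delicate part of the argument.
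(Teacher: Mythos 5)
Your proposal is correct and follows essentially the same route as the paper's proof: bound the front-run fraction in epoch $e$ by $\tilde{W}_a[e-1]/(\tilde{W}_u[e-1]-\tau)$, substitute the adversary upper bound and user lower bound from the two lemmas, and conclude from $f < 1/(1+\sigma)$ (implied by $f < (1-\sigma-\epsilon)/(1+\sigma)$) together with $W_u[0] > \tau/(1-\sigma f)$ that the ratio of growth rates is strictly less than one, so the bound vanishes geometrically. Your factored form of the denominator, $\rho^{e-1}\bigl(W_u[0] - \tau/(1-\sigma f)\bigr) + \tau/(1-\sigma f) - \tau$, is just a cleaner rearrangement of the same expression the paper manipulates in its limit computation.
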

\begin{proof}
The number of user transactions front run during epoch $e$ is $\lfloor \tilde{W}_a[e-1]/y \rfloor$. 
The total number of transactions made by the user during epoch $e$ is $\lceil (\tilde{W}_u[e]-\tau) / y\rceil$. 
Since each of the adversary's tokens can be used to front run a transaction, the fraction of transactions that are front run during the epoch is given by 
\begin{align}
\frac{\lfloor \tilde{W}_a[e-1]/y \rfloor}{\lceil (\tilde{W}_u[e-1]-\tau) / y\rceil} \leq \frac{\tilde{W}_a[e-1]/y}{(\tilde{W}_u[e-1]-\tau) / y} = \frac{\tilde{W}_a[e-1]}{(\tilde{W}_u[e-1]-\tau)}. 
\end{align}
From Lemma~\ref{lem:advwealthub} and~\ref{lem:userwealtlb} we have the percentage of front run transactions as at most 
\begin{align}
\frac{W_a[0] \left(1 + \frac{f \eta}{y} \right)^{e-1}}{W_u[0]\left( 1+\frac{\eta}{y} - \frac{\sigma f \eta}{y}\right)^{e-1} -  \frac{\tau \eta}{y} \left( \frac{\left( 1 + \frac{\eta}{y} - \frac{\sigma f \eta}{y}\right)^{e-1} - 1}{\left(  \frac{\eta}{y} - \frac{\sigma f \eta }{y}\right)} \right) - \tau}. \label{eq:percentagefrontrunbound}
\end{align}
Taking limit of Equation~\eqref{eq:percentagefrontrunbound} as $e\rightarrow \infty$, we have 
\begin{align}
& \lim_{e\rightarrow \infty} \frac{W_a[0] \left(1 + \frac{f \eta}{y} \right)^{e-1}}{W_u[0]\left( 1+\frac{\eta}{y} - \frac{\sigma f \eta}{y}\right)^{e-1} -  \frac{\tau \eta}{y} \left( \frac{\left( 1 + \frac{\eta}{y} - \frac{\sigma f \eta}{y}\right)^{e-1} - 1}{\left(  \frac{\eta}{y} - \frac{\sigma f \eta }{y}\right)} \right) - \tau} \notag \\
= & \lim_{e\rightarrow \infty} \frac{W_a[0]}{W_u[0]\left( \frac{y+\eta-\sigma f \eta}{y + f\eta} \right)^{e-1} - \frac{\tau}{1-\sigma f} \left( \left( \frac{y+\eta-\sigma f \eta}{y + f\eta} \right)^{e-1} - \frac{y^{e-1}}{(y+f\eta)^{e-1}} \right) - \frac{\tau y^{e-1}}{(y+f\eta)^{e-1}} } \notag \\
= & \lim_{e\rightarrow \infty} \frac{W_a[0]}{\left( W_u[0] - \frac{\tau}{1-\sigma f} \right) \left( \frac{y+\eta-\sigma f \eta}{y + f\eta} \right)^{e-1} + \frac{\tau}{1-\sigma f} \frac{y^{e-1}}{(y+f\eta)^{e-1}}  - \frac{\tau y^{e-1}}{(y+f\eta)^{e-1}} } = 0, 
\end{align}
since $y+\eta - \sigma f \eta > y + f\eta \iff f<1/(1+\sigma)$ which is true, and $W_u[0] > \tau/(1-\sigma f)$. 
\end{proof}

\begin{theorem}
\label{thm:advwealthoptimal}
The total wealth earned by the adversary after $k$ epochs under policy $\pi_a$ is at most a factor $\frac{y}{y-\eta \epsilon} $ away from the total reward under any optimal policy. 
\end{theorem}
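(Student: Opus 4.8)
The plan is to sandwich the optimal reward from above and the reward of $\pi_a$ from below, and to connect the two bounds through the integrality of token purchases. The starting observation is that front-running is the only mechanism by which the adversary increases its total wealth, and every front run consumes a token costing $y$. Hence, regardless of policy, the number of victim transactions front run during an epoch that starts with adversary wealth $W$ cannot exceed $W/y$, so the reward earned in that epoch is at most $\frac{W}{y}f\eta$. Iterating this fractional bound over epochs yields an upper bound on the reward of \emph{any} policy of the same closed form as in Lemma~\ref{lem:advwealthub}, namely $W_a[0]\big((1+f\eta/y)^k-1\big)$; this serves as my proxy for the optimal reward.

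Next I would lower bound the reward of $\pi_a$ using its balancedness, established in the balancedness theorem earlier in this section. Because $\pi_a$ is balanced, in an epoch $e$ it front runs exactly $\lfloor \tilde{W}_a[e-1]/y\rfloor$ transactions and therefore earns $\lfloor \tilde{W}_a[e-1]/y\rfloor f\eta$. The bridge between the two bounds is the hypothesis $W_a[0] > y^2/(\eta\epsilon)$. Since wealth is non-decreasing, $\tilde{W}_a[e-1] \geq W_a[0] > y^2/(\eta\epsilon)$, so $y/\tilde{W}_a[e-1] < \eta\epsilon/y$ and hence
\[
\left\lfloor \frac{\tilde{W}_a[e-1]}{y} \right\rfloor \geq \frac{\tilde{W}_a[e-1]}{y} - 1 = \frac{\tilde{W}_a[e-1]}{y}\left(1 - \frac{y}{\tilde{W}_a[e-1]}\right) > \frac{y-\eta\epsilon}{y}\cdot\frac{\tilde{W}_a[e-1]}{y}.
\]
Thus, measured from a common wealth level, $\pi_a$ captures at least a fraction $\frac{y-\eta\epsilon}{y}$ of the fractional (hence optimal) per-epoch reward, which is exactly the claimed factor $\frac{y}{y-\eta\epsilon}$.

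The step I expect to be the main obstacle is promoting this clean per-epoch comparison to a statement about the \emph{total} reward accumulated over all $k$ epochs. The difficulty is that the two trajectories diverge: the fractional upper bound grows by $1+f\eta/y$ each epoch, whereas the lower bound on $\pi_a$ grows only by $1+\frac{(y-\eta\epsilon)f\eta}{y^2}$, so dividing the two totals term by term would accumulate a $k$-dependent gap rather than the constant $\frac{y}{y-\eta\epsilon}$. The way I would resolve this is to note that $\pi_a$ is in fact the reward-maximizing \emph{integer} policy: greedily front running the maximum number of affordable transactions each epoch is globally optimal because the update map $W \mapsto W + \lfloor W/y\rfloor f\eta$ is monotone in $W$, so no feasible policy can reach higher wealth than $\pi_a$ at any epoch. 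Consequently the only slack between $\pi_a$ and the true optimum is the integrality relaxation quantified by the displayed floor inequality, and $\frac{y}{y-\eta\epsilon}$ bounds precisely this gap. I would therefore present the final bound as a per-epoch guarantee against the fractional relaxation of the same dynamics, taking care with the additive $-W_a[0]$ terms when passing from total wealth to total reward; if instead a single $k$-horizon ratio is desired, I would restrict attention to the dominant (final) epoch, whose reward dominates the geometric sum and inherits the same constant factor.
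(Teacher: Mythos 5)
Your first two paragraphs coincide with the paper's own setup: the fractional recursion $W \mapsto W(1+f\eta/y)$ upper-bounds every policy, and balancedness gives $\pi_a$ exactly $\lfloor \tilde{W}_a[e-1]/y\rfloor f\eta$ per epoch. You also correctly identified the crux — a \emph{multiplicative} per-epoch floor bound compounds into a $k$-dependent factor. But neither of your proposed ways out closes that gap. The paper's actual fix is to use the floor bound in \emph{additive} form, $\lfloor x \rfloor \geq x - 1$, so each epoch loses only a fixed additive amount $f\eta$ relative to the fractional trajectory; unrolling gives
\begin{align*}
\tilde{W}_a[k] \;\geq\; W_a[0]\Bigl(1+\tfrac{f\eta}{y}\Bigr)^{k} - f\eta \sum_{j=0}^{k-1}\Bigl(1+\tfrac{f\eta}{y}\Bigr)^{j} \;=\; \bigl(W_a[0]-y\bigr)\Bigl(1+\tfrac{f\eta}{y}\Bigr)^{k} + y,
\end{align*}
so the accumulated loss carries coefficient $y$ while the main term carries coefficient $W_a[0]$, and the ratio to the fractional bound is at most $W_a[0]/(W_a[0]-y) \leq y/(y-\eta\epsilon)$ by the hypothesis $W_a[0] > y^2/(\eta\epsilon)$ — uniformly in $k$, with no compounding. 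Your inequality $\lfloor W/y\rfloor > \frac{y-\eta\epsilon}{y}\cdot\frac{W}{y}$ discards exactly this structure, and your fallback of restricting to the final epoch fails for the same reason: comparing final-epoch rewards requires comparing the two wealths at epoch $k-1$, which is precisely the non-compounding trajectory statement you have not established.

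The one idea in your last paragraph that does work — and is genuinely different from the paper — is the greedy/monotonicity claim: every feasible policy obeys $W[e] \leq W[e-1] + \lfloor W[e-1]/y\rfloor f\eta$ (each front run requires a distinct token bought in an earlier epoch, each locking $y$ of wealth), $\pi_a$ achieves this recursion with equality by balancedness, and the map $W \mapsto W + \lfloor W/y\rfloor f\eta$ is monotone, so induction gives that $\pi_a$'s wealth dominates every feasible policy's wealth at every epoch. Had you committed to this, the theorem would follow immediately (with factor $1$, strictly stronger than the claimed $y/(y-\eta\epsilon)$), and none of the fractional-relaxation machinery would be needed. Instead, you fold it back into the assertion that the remaining slack ``to the true optimum'' is the integrality gap bounded by your displayed floor inequality — which conflates the optimal \emph{feasible} policy with the fractional \emph{relaxation}, and reintroduces the compounding problem you had just diagnosed. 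As written, the argument does not close; it becomes a correct proof either by adopting the paper's additive-error accounting against the fractional relaxation, or by promoting your monotonicity remark from an aside into the entire argument.
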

(Proof in Appendix~\ref{apx: theorem 4})

\section{Experiments}
We perform extensive experiments in order to show the results of the tokenized transaction protocol under different conditions. The experiments are performed by fixing user policy as described in Algorithm \ref{Algorithm 1} and intelligent adversary policy, as per Algorithm \ref{Algorithm 2}. We run the experiment for 10000 rounds and use the following values of $f=0.8,\eta=100$ for our simulations. We assume that only a single MEV transaction is included in the block. The user starts with an initial wealth of $w_\user=1000\$$ and the adversary starts with $\frac{w_{\user}}{2}$. $y=80$ is our chosen token cost. 

\subsection{Transaction Models}
We compare the results of our experiment, to the current status-quo, that contains transactions that are being ordered by a relay such as flashbots, and provided to a validator. Here, $f=1$, since the transactions are always attacked by the adversary. We also compare our results to an ideal scenario, where no MEV transaction is attacked. We use the final wealth of the user as an indicator to determine the long term feasibility of our policy. We also consider the percentage of transactions that are frontrun by the adversary, in order to track the mitigation of the MEVs from our method. We consider the wealth of all parties in the system based on the following model configurations:

\begin{itemize}
    \item Constant $\eta$: This state is when each block only has a single MEV transaction, and the user and adversary gain a constant, deterministic reward $\eta$ for making this transaction. If an adversary attacks the transaction, they earn a constant reward of $f\eta$, while the user earns $\eta-f\eta$.    
    \item Stochastic $\eta$: This state is when each block has a single MEV transaction, but the user and adversary gain a reward that is stochastic. 
    \item Fatal frontrunning: While making MEV transactions, there are some attacks that cannot be blocked if they are frontrun, this is called fatal frontrunning. In this case, we include Type 1 and Type 2 transactions, where Type 1 are those transactions that can be protected (such as swaps), and Type 2 are those that cannot (such as arbitrage attacks and liquidation).
    \item Real world $\eta$: In this case, we consider real values of MEV profits that have been extracted from the Ethereum blockchain. 
\end{itemize}

\subsection{Tokenization on constant $\eta$}
In this case, we consider the wealth gained by the user and the adversary based on constant $\eta=100$, as shown in Figure \ref{fig:const-eta}. 
We see in Figure \ref{fig:const-eta}, that the total user wealth, while using Masquerade is comparable to the ideal scenario, one in which the adversary does not exist. The status-quo represents the current protocol, where all MEV transactions are frontrun by the adversary. We see, that on average, only 30\% of transactions are attacked, which is a significant improvement, compared to all transactions being attacked by adversary. We see, that when the user waits for a certain period of time, they are able to beat the adversary and protect their transaction.

\begin{figure}
    \centering
    \includegraphics[width=0.45\linewidth]{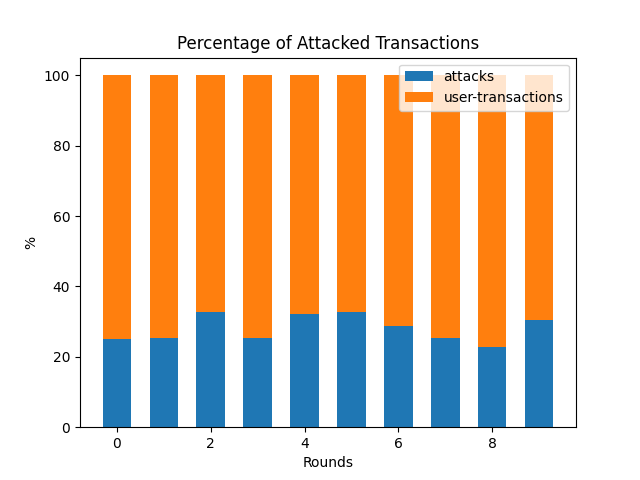}
    \includegraphics[width=0.45\linewidth]{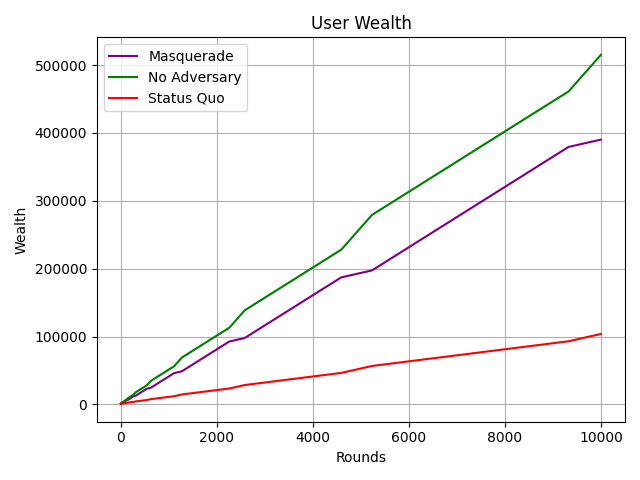}
     \caption{(a) Shows the percentage of transactions successfully attacked by the adversary every 1000 rounds. (b) The accumulation of the wealth of users and the adversaries when using Masquerade (our method) v/s the current status quo.}
    \label{fig:const-eta}
\end{figure}

\begin{table}[]
    \begin{tabular}{|c|c|c|c|c|c|c|}
    \toprule
    Method & \multicolumn{2}{c}{Wealth of User} & \multicolumn{2}{c}{ Wealth of Adversary}&\% Frontrun&\% Backrun \\   
    
     &Initial&Final&Initial&Final&&\\
     \hline
      Status Quo & 1000&103840.0& 500 & 411560.0&100&100 \\
      Token MEV & 1000&390240.0& 500 & 125160.0&29.81&30.90\\
      Ideal case &1000&515400.0&0&0&0&0\\
    \bottomrule
    \end{tabular}
    \caption{Improvements in transaction unstealibility using tokens}
    \label{tab:wealth}
\end{table}

\subsection{Tokenization on stochastic $\eta$}
In this case, we consider the wealth gained by the user and the adversary for a stochastic $\eta$ based on samples drawn from a Gaussian distribution, as shown in Figure \ref{fig:stochastic-eta}. 
We also consider a heavy tailed Cauchy distribution as shown in Figure \ref{fig:stochastic-eta}. This is a more interesting case, as the value earned by stealing the transaction is vastly different. The adversary now, no longer can predict what value each MEV transaction that can be potentially made in the future holds. They are only aware of current token request transactions, and tokenized MEV transactions. Now, the adversary has to carefully decide if they would like to use their best token to attack the current tokenized MEV transaction, or wait in case a better prospect in the future shows up. If the adversary chooses not to attack, the user benefits directly. If the adversary attacks, they may lose out on a future transaction. To capture this, we consider simple modifications to Algorithm 1 where the user decides to use their lowest token if the value of $\eta>100$, and second lowest token otherwise. Similarly, in Algorithm 2, the adversary decides whether to use their best token in the current round, or the next.  

\begin{figure}[]
    \centering
    \includegraphics[width=0.45\linewidth]{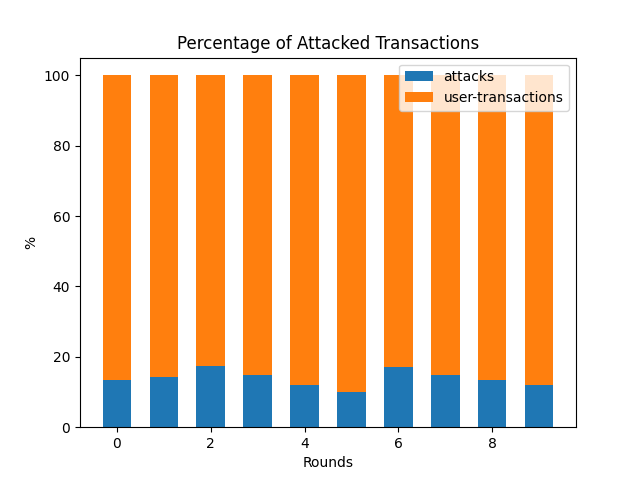}
    \includegraphics[width=0.45\linewidth]{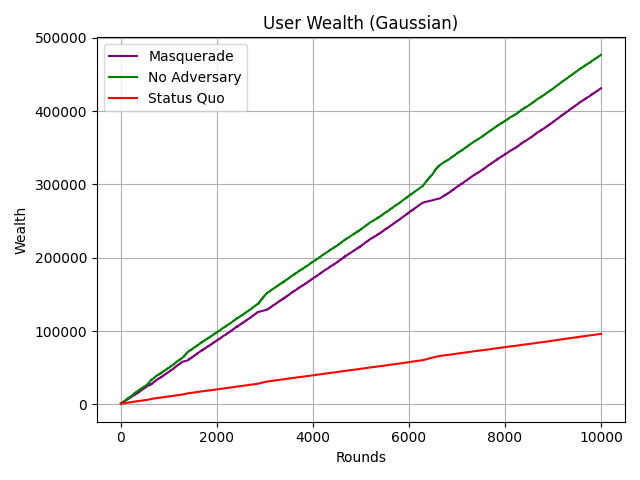}
    \includegraphics[width=0.45\linewidth]{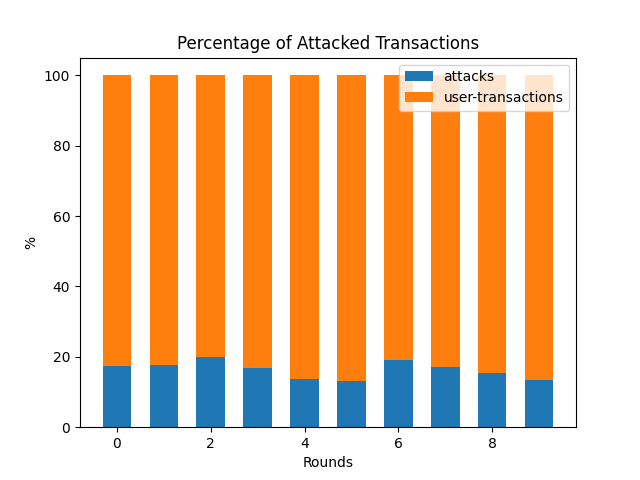}
    \includegraphics[width=0.45\linewidth]{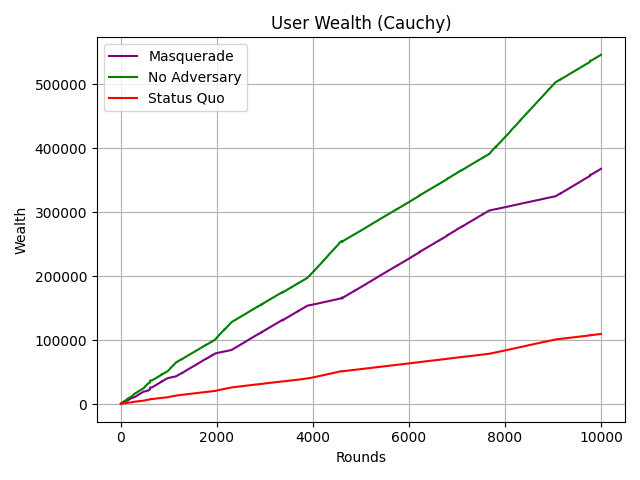}    
    \caption{The user and adversary wealths for stochastic values of eta, drawn from Gaussian and Cauchy distributions}
    \label{fig:stochastic-eta}
\end{figure}

Here too, we see that our algorithm performs quite well, improving the user's wealth compared to current status quo. We also see that the adversary is unable to attack all transactions, and even misses out on crucial, high value MEVs. The challenge for the adversary here, lies in guessing the most profitable transactions, and being able to attack them successfully. If the adversary waits for too long in a phase without using tokens, then the adversary misses out on purchasing early tokens in the next phase. The user in the next phase can generally exploit this and use later tokens for low value transactions and early tokens for high value transactions. We can consider several other efficient user policies in this case. 

\subsection{Fatal frontrunning attacks}
In a fatal frontrunning attack, the attacker's transaction executes while the user's transaction fails. This happens mostly in case of arbitrages, or loan liquidations. If the user discovers an arbitrage opportunity and issues a transaction, the adversary can observe the user's transaction, duplicate it, and issue its own transaction with a higher gas fee. In this case, the adversary's transaction executes, while the user receives no profit. No matter what slippage the user specifies, the adversary has gained the full profit. This leads us to two types of transactions, the first kind, or type 1 transactions, which involve exchanges that can be controlled with $f,\eta$, and type 2 transactions, that are fatal, such as liquidation attacks and arbitrage attacks. Creating an MEV transaction is a computationally expensive operation for the user. The more compute the user can afford, the higher value MEV transactions they can find. Thus, here, if the user sees that they are not getting adequate rewards for the compute spent, the user would try to find computationally cheaper MEV transactions, which for our experiments is equivalent to low $\eta$ transactions. Thus, for type 2 transactions, the user only engages with small $\eta$.

We ran our experiment on attacks that include fatal frontrunning attacks, and we see that in such cases, we are able to only prevent a very small percentage of these attacks, however even so, the total wealth obtained by the user remains better than compared to the status quo. For fatal frontrunning attacks that happened about 50\% of the time, users get successfully attacked 70\% of the time. This number decreased when we slightly modified Algorithm \ref{Algorithm 1}. In this case, the user only engaged with $\eta<100$ for type 2 transactions, and thus, the percentage of frontruns decreased significantly. In Figure \ref{figF50}, we include fatal frontrunning attacks both 50\% and 30\% of the time, and observe that users are able to increase their wealth successfully.  

\begin{figure}
    \centering
    \includegraphics[width=0.45\linewidth]{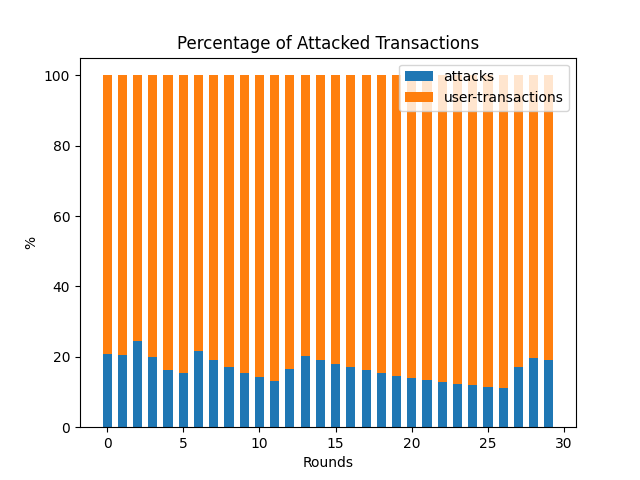}
    \includegraphics[width=0.45\linewidth]{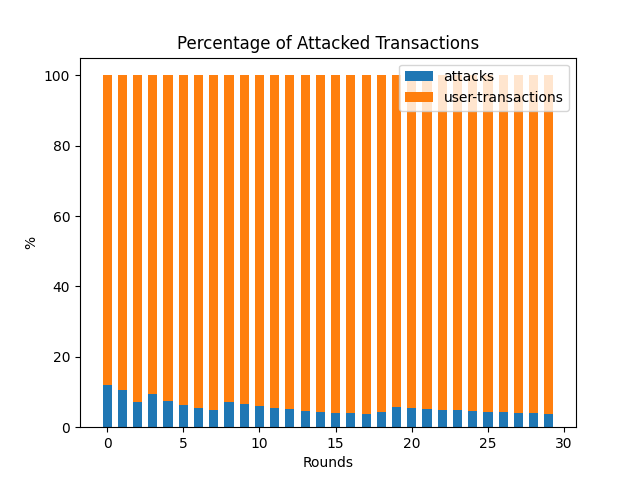}
    \includegraphics[width=0.45\linewidth]{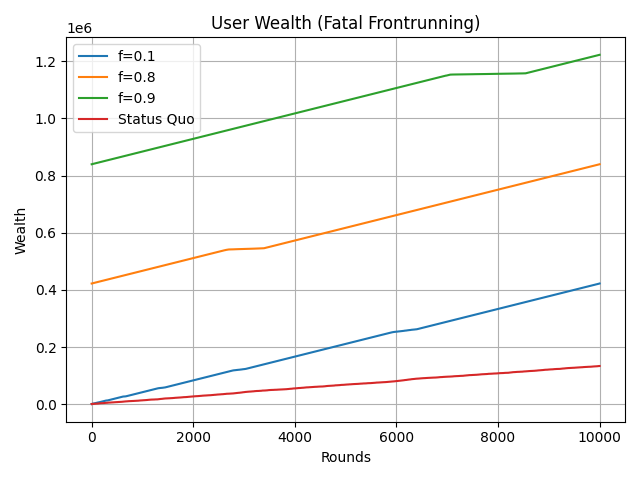}
    \includegraphics[width=0.45\linewidth]{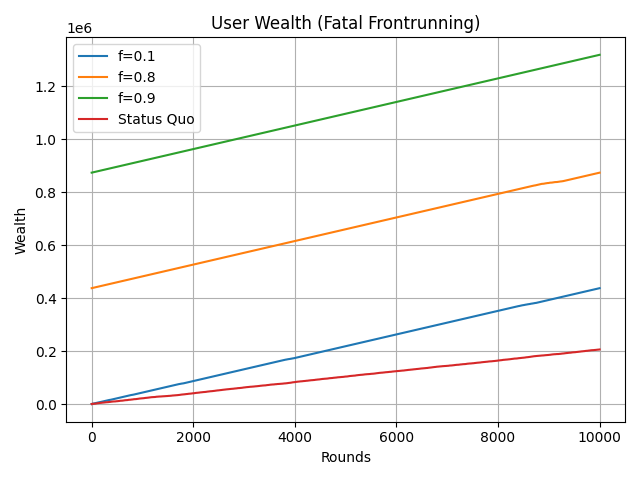}

       \caption{The wealth of the adversary, when including fatal frontrunning attacks (a)50\%  and (b)30\% of the time, for f=0.1,0.8 and 0.9.}
    \label{figF50}
\end{figure}

\subsection{Tokenization on real-world MEV rewards}
Here, we consider the real world MEV values that we have extracted from the Ethereum blockchain. We use a similar procedure as when $\eta$ is stochastic. Here too, we see that masquerade performs better than current practices. Figure \ref{fig:real-mev} shows the total wealth of the users in this system, along with successful attacks. 
\begin{figure}
    \centering
    \includegraphics[width=0.45\linewidth]{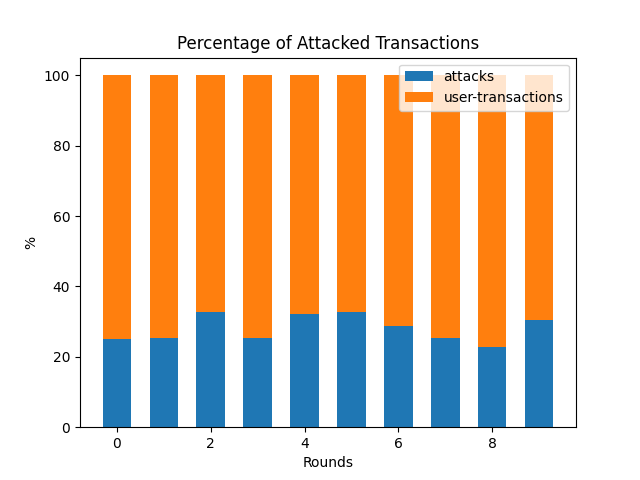}
    \includegraphics[width=0.45\linewidth]{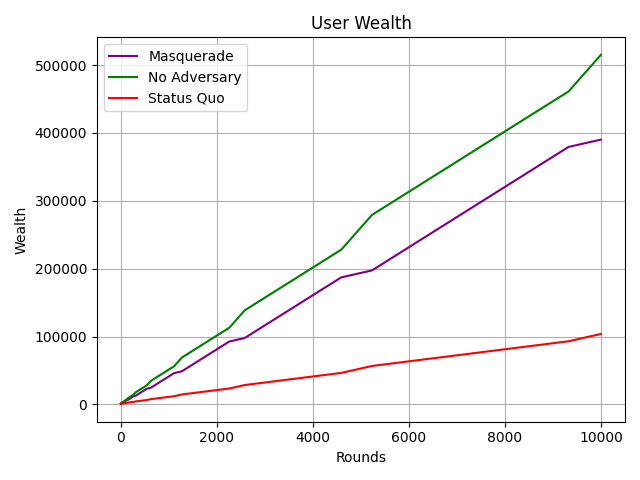}
    \caption{Masquerade performs quite well even on real world Ethereum data}
    \label{fig:real-mev}
\end{figure}

\subsection{Token Epochs}
For ease in analysis, we consider an alternate arrangement of the current proposed model, as described in Section 5, where we separate the token purchase and token spending into epochs. We can see that this process of separating the token purchase and spend into phases is the same as our method, where the tokens are purchased and spent continuously, over the course of a number of rounds, as visualized in Figure \ref{PhaseComp}. 

\begin{figure}
    \centering
    \includegraphics[width=0.5\linewidth]{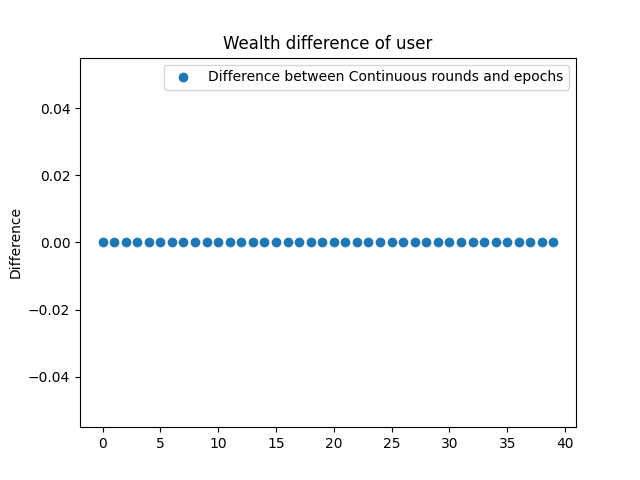}
    \caption{Here, we see that the user wealth and the adversary wealth are equal at different rounds/epochs, they are simply shifted, and their difference is constant.}
    \label{PhaseComp}
\end{figure}

We observe that the continuous token purchase lags by a number of rounds. This is because in the continuous case we also consider the possibility of a user wanting to make non-MEV transactions, and the adversary does not attack these transactions. As a result, there is a slight delay in reaching the same amounts of wealth. However, they are essentially the same.

\subsection{Ablation studies}

We perform multiple ablation studies using different token conditions, such as different token costs, expiration of the tokens, scenarios where the token cost is unable to be returned to the users and the adversaries in the following scenarios:

\subsubsection{Token Costs}
We consider different scenarios for token costs, ranging from extremely cheap tokens, to extremely expensive tokens. As the tokens are eventually returned to the users in the system, the token costs do not influence total wealth of the system. If the tokens are cheaper, more users will be able to afford them, whereas if the tokens are expensive, fewer users can afford them, which reduces the number of tokenized transactions that can be made in a single round. As a result, the adversary is able to manipulate this result due to all users being able to afford only small number of tokens. We see that even when starting with a small sum of wealth, the adversary is unable to frontrun or backrun the user, when the token costs are reasonably priced.


\subsubsection{Token with Expiration}
In this case, the tokens expire after a set number of rounds. If the user does not use these tokens within the set limit, they expire and the users are unable to recoup the costs spent on the token. As a consequence of this, we observe that it is generally in best interest of the user and adversary to not hoard tokens and spend tokens before they expire.

\subsubsection{Token with no Refund}
In this case, a user will not receive a refund when they make a tokenized transaction. Here, the token costs need to be cheap in order to incentivize users to make tokenized transactions.

\section{Related Work}
Common types of MEVs include frontrunning, backrunning and sandwich attacks. An adversary is generally able to frontrun the transaction by listing a higher transaction fee, so that their transaction is executed before the honest user. Backrunning involves a similar strategy, except the adversary's transaction is placed immediately after the user. Sometimes, an adversary can place a pair of transactions, right before and right after the user transaction, which can lead to an artificial manipulation of asset prices, resulting in profit at the cost of the user.

These MEVs, pose a significant threat to users, cause network congestion and may even lead to centralization without appropriate mitigation strategies. So far, there have been efforts to introduce countermeasures to MEVs. In this section, we list the most popular solutions to MEV as follows:
\begin{itemize}
    \item MEV Auction Platforms: These MEV auction platforms are commonly used in post-merge Ethereum. Generally, the idea here is that there exist builders such as Flashbots\cite{noauthor_flashbots_nodate}, BloXroute\cite{klarman2018bloxroute}, MEV-Boost\cite{yang2022sok} who assemble blocks with transactions they receive from users, the public mempool as well as the ones the builder itself inserts to generate MEV. A user is allowed to specify their transaction ordering preferences to miners in exchange for compensation. These assembled blocks are then submitted to a block proposer, with some promised profits. This is a "commit-and-reveal" strategy, i.e the proposer is unable to see the order of the transactions or block contents, they can only look at the promised profit and some other metadata. The block proposer then selects the block they wish to propose. 
    MEV auction platforms attempt to create a more competitive and efficient environment for users to have control over how their transactions are ordered and to capture some of the value that would otherwise go to miners. These platforms can be beneficial for users who want to ensure their transactions are executed in a specific way or who want to benefit from MEV themselves. A key issue with this method is the reliance on a "trusted builder", though theoretically, any builder can be part of the network. 

    \item Time-based Ordering Solutions: These solutions rely on establishing a certain order in transactions, they establish a set of properties that the transaction needs to satisfy in order to be part of the block. An example of this class of solutions is Hedera\cite{baird2019hedera}, that uses median to establish the timestamp of the received transaction. Fair ordering means that $\tx_u$ is executed before $\tx_a$ if the median received time of $\tx_u$<median received time of $\tx_a$. A problem of using median timestamps in this way, is that it is susceptible to manipulation by adversary. Other examples of receive-order fairness are Themis\cite{kelkar2021themis} and Aequitas\cite{kelkar2020order} which consider that if atleast $\gamma$ of the nodes receive transaction $\tx_u$ before $\tx_a$, then $\tx_u$ can be included no later than $\tx_a$. Wendy\cite{kursawe2020wendy} uses relative fairness to approach this issue. Their solution, is that if there exists some time $t$ at which all honest nodes have seen transaction $\tx_u$, and they then saw transaction $\tx_a$ after $t$, then $\tx_u$ is executed before $\tx_a$. These class of solutions pose a risk to the users based on latencies of nodes. 

    \item Content-agnostic Ordering: These class of solutions do not impose a constraint on the ordering of the transactions, as long as the ordering is independent of the content of these transactions. Most of these algorithms generally encrypt all transactions, or use a trusted third party in order to hide the transactions. They then wait for these transactions to be committed on to the Blockchain, and finally provide the secret key in order to reveal the transaction, and ensure validity etc. An example of this is TEX\cite{khalil2019tex}, which is used in a proof of work situation, where the user encrypts their transaction using timelock puzzles, with the understanding that the attacker cannot solve the puzzles faster than the user. This uses a trusted custodian, and thus requires users to regularly check in to make sure the custodian is not misbehaving. Tesseract\cite{bentov2019tesseract} requires the user to trust additional secure hardware called TEE (Trusted Execution Environment), which is responsible for encrypting the transactions and releasing them after they have been committed to the block. Most of these solutions remove ordering privileges from the miners, but result in additional trust assumptions that may reduce the decentralization in the network. 
\end{itemize}

\section{Conclusion}
In this paper, we have shown that the utility obtained by the tokenization system is better than the current utility, and we have suggested a possible reform to the system without the need of a trusted builder or proposer. 
The token system is a very powerful way for the user to be able to trick the adversary in the real world, as an adversary is faced with the choice of spending their token or saving it for a later transaction that may of may not provide a better value. An adversary only has the information that a token has been purchased, and is unable to decipher which token can be used for which transaction, until it is too late. We have shown a huge improvement in the transaction unstealability for a user in the system, thus improving the overall fairness in the network. 
One of the major drawbacks of Masquerade, is that it does not completely eliminate the presence of MEV attacks. To a large extent, our algorithm prevents user losses, and stops transaction order manipulation by removing a significant amount of miner privileges, however, blind frontrunning is still possible in the system. An extension to the project involves looking for lightweight, scalable solutions that can build on top of Masquerade in order to completely eliminate MEV attacks. One possible extension to this project also is to include dynamic behaviors by the adversary and user in response to current policies.





\bibliographystyle{ACM-Reference-Format}
\bibliography{sample-base}


\begin{thebibliography}{25}


\ifx \showCODEN    \undefined \def \showCODEN     #1{\unskip}     \fi
\ifx \showDOI      \undefined \def \showDOI       #1{#1}\fi
\ifx \showISBNx    \undefined \def \showISBNx     #1{\unskip}     \fi
\ifx \showISBNxiii \undefined \def \showISBNxiii  #1{\unskip}     \fi
\ifx \showISSN     \undefined \def \showISSN      #1{\unskip}     \fi
\ifx \showLCCN     \undefined \def \showLCCN      #1{\unskip}     \fi
\ifx \shownote     \undefined \def \shownote      #1{#1}          \fi
\ifx \showarticletitle \undefined \def \showarticletitle #1{#1}   \fi
\ifx \showURL      \undefined \def \showURL       {\relax}        \fi
\providecommand\bibfield[2]{#2}
\providecommand\bibinfo[2]{#2}
\providecommand\natexlab[1]{#1}
\providecommand\showeprint[2][]{arXiv:#2}

\bibitem[noa(2023a)]%
        {noauthor_binance_nodate}
 \bibinfo{year}{2023}\natexlab{a}.
\newblock \bibinfo{title}{Binance - {Cryptocurrency} {Exchange} for {Bitcoin},
  {Ethereum} \& {Altcoins}}.
\newblock
\newblock
\urldef\tempurl%
\url{https://www.binance.com}
\showURL{%
\tempurl}


\bibitem[def(2023)]%
        {defisurge}
 \bibinfo{year}{2023}\natexlab{}.
\newblock \bibinfo{title}{Decentralized Finance Market to Surge}.
\newblock
\newblock
\urldef\tempurl%
\url{https://finance.yahoo.com/news/decentralized-finance-market-surge-usd-131500988.html}
\showURL{%
\tempurl}


\bibitem[noa(2023b)]%
        {noauthor_flashbots_nodate}
 \bibinfo{year}{2023}\natexlab{b}.
\newblock \bibinfo{title}{Flashbots}.
\newblock
\newblock
\urldef\tempurl%
\url{https://www.flashbots.net}
\showURL{%
\tempurl}


\bibitem[mev(2023)]%
        {mevoutlook}
 \bibinfo{year}{2023}\natexlab{}.
\newblock \bibinfo{title}{MEV Outlook 2023}.
\newblock
\newblock
\urldef\tempurl%
\url{https://eigenphi.substack.com/p/mev-outlook-2023#:~:text=In%202022%2C%20the%20total%20MEV,%2472.8%20billion%20contributed%20by%20bots}
\showURL{%
\tempurl}


\bibitem[noa(2023c)]%
        {noauthor_pancakeswap_nodate}
 \bibinfo{year}{2023}\natexlab{c}.
\newblock \bibinfo{title}{Pancake Swap - {Cryptocurrency} {Exchange} for
  {Ethereum}}.
\newblock
\newblock
\urldef\tempurl%
\url{https://pancakeswap.finance/}
\showURL{%
\tempurl}


\bibitem[Adams et~al\mbox{.}(2021)]%
        {adams2021uniswap}
\bibfield{author}{\bibinfo{person}{Hayden Adams}, \bibinfo{person}{Noah
  Zinsmeister}, \bibinfo{person}{Moody Salem}, \bibinfo{person}{River Keefer},
  {and} \bibinfo{person}{Dan Robinson}.} \bibinfo{year}{2021}\natexlab{}.
\newblock \showarticletitle{Uniswap v3 core}.
\newblock \bibinfo{journal}{\emph{Tech. rep., Uniswap, Tech. Rep.}}
  (\bibinfo{year}{2021}).
\newblock


\bibitem[Baird et~al\mbox{.}(2019)]%
        {baird2019hedera}
\bibfield{author}{\bibinfo{person}{Leemon Baird}, \bibinfo{person}{Mance
  Harmon}, {and} \bibinfo{person}{Paul Madsen}.}
  \bibinfo{year}{2019}\natexlab{}.
\newblock \showarticletitle{Hedera: A public hashgraph network \& governing
  council}.
\newblock \bibinfo{journal}{\emph{White Paper}}  \bibinfo{volume}{1}
  (\bibinfo{year}{2019}), \bibinfo{pages}{9--10}.
\newblock


\bibitem[Bentov et~al\mbox{.}(2019)]%
        {bentov2019tesseract}
\bibfield{author}{\bibinfo{person}{Iddo Bentov}, \bibinfo{person}{Yan Ji},
  \bibinfo{person}{Fan Zhang}, \bibinfo{person}{Lorenz Breidenbach},
  \bibinfo{person}{Philip Daian}, {and} \bibinfo{person}{Ari Juels}.}
  \bibinfo{year}{2019}\natexlab{}.
\newblock \showarticletitle{Tesseract: Real-time cryptocurrency exchange using
  trusted hardware}. In \bibinfo{booktitle}{\emph{Proceedings of the 2019 ACM
  SIGSAC Conference on Computer and Communications Security}}.
  \bibinfo{pages}{1521--1538}.
\newblock


\bibitem[Brennecke et~al\mbox{.}(2022)]%
        {brennecke2022central}
\bibfield{author}{\bibinfo{person}{Martin Brennecke}, \bibinfo{person}{Tobias
  Guggenberger}, \bibinfo{person}{Benjamin Schellinger}, {and}
  \bibinfo{person}{Nils Urbach}.} \bibinfo{year}{2022}\natexlab{}.
\newblock \showarticletitle{The de-central bank in decentralized finance: a
  case study of MakerDAO}.
\newblock  (\bibinfo{year}{2022}).
\newblock


\bibitem[Cachin et~al\mbox{.}(2022)]%
        {cachin2022quick}
\bibfield{author}{\bibinfo{person}{Christian Cachin}, \bibinfo{person}{Jovana
  Mi{\'c}i{\'c}}, \bibinfo{person}{Nathalie Steinhauer}, {and}
  \bibinfo{person}{Luca Zanolini}.} \bibinfo{year}{2022}\natexlab{}.
\newblock \showarticletitle{Quick order fairness}. In
  \bibinfo{booktitle}{\emph{International Conference on Financial Cryptography
  and Data Security}}. Springer, \bibinfo{pages}{316--333}.
\newblock


\bibitem[Chitra and Kulkarni(2022)]%
        {chitra2022improving}
\bibfield{author}{\bibinfo{person}{Tarun Chitra} {and} \bibinfo{person}{Kshitij
  Kulkarni}.} \bibinfo{year}{2022}\natexlab{}.
\newblock \showarticletitle{Improving proof of stake economic security via mev
  redistribution}. In \bibinfo{booktitle}{\emph{Proceedings of the 2022 ACM CCS
  Workshop on Decentralized Finance and Security}}. \bibinfo{pages}{1--7}.
\newblock


\bibitem[Heimbach et~al\mbox{.}(2023)]%
        {heimbach2023ethereum}
\bibfield{author}{\bibinfo{person}{Lioba Heimbach}, \bibinfo{person}{Lucianna
  Kiffer}, \bibinfo{person}{Christof~Ferreira Torres}, {and}
  \bibinfo{person}{Roger Wattenhofer}.} \bibinfo{year}{2023}\natexlab{}.
\newblock \showarticletitle{Ethereum's Proposer-Builder Separation: Promises
  and Realities}.
\newblock \bibinfo{journal}{\emph{arXiv preprint arXiv:2305.19037}}
  (\bibinfo{year}{2023}).
\newblock


\bibitem[Heimbach and Wattenhofer(2022a)]%
        {heimbach2022eliminating}
\bibfield{author}{\bibinfo{person}{Lioba Heimbach} {and} \bibinfo{person}{Roger
  Wattenhofer}.} \bibinfo{year}{2022}\natexlab{a}.
\newblock \showarticletitle{Eliminating sandwich attacks with the help of game
  theory}. In \bibinfo{booktitle}{\emph{Proceedings of the 2022 ACM on Asia
  Conference on Computer and Communications Security}}.
  \bibinfo{pages}{153--167}.
\newblock


\bibitem[Heimbach and Wattenhofer(2022b)]%
        {heimbach2022sok}
\bibfield{author}{\bibinfo{person}{Lioba Heimbach} {and} \bibinfo{person}{Roger
  Wattenhofer}.} \bibinfo{year}{2022}\natexlab{b}.
\newblock \showarticletitle{Sok: Preventing transaction reordering
  manipulations in decentralized finance}.
\newblock \bibinfo{journal}{\emph{arXiv preprint arXiv:2203.11520}}
  (\bibinfo{year}{2022}).
\newblock


\bibitem[Kazerani et~al\mbox{.}(2017)]%
        {kazerani2017determining}
\bibfield{author}{\bibinfo{person}{Ali Kazerani}, \bibinfo{person}{Domenic
  Rosati}, {and} \bibinfo{person}{Brian Lesser}.}
  \bibinfo{year}{2017}\natexlab{}.
\newblock \showarticletitle{Determining the usability of bitcoin for beginners
  using change tip and coinbase}. In \bibinfo{booktitle}{\emph{Proceedings of
  the 35th ACM International Conference on the Design of Communication}}.
  \bibinfo{pages}{1--5}.
\newblock


\bibitem[Kelkar et~al\mbox{.}(2021)]%
        {kelkar2021themis}
\bibfield{author}{\bibinfo{person}{Mahimna Kelkar}, \bibinfo{person}{Soubhik
  Deb}, \bibinfo{person}{Sishan Long}, \bibinfo{person}{Ari Juels}, {and}
  \bibinfo{person}{Sreeram Kannan}.} \bibinfo{year}{2021}\natexlab{}.
\newblock \showarticletitle{Themis: Fast, strong order-fairness in byzantine
  consensus}.
\newblock \bibinfo{journal}{\emph{Cryptology ePrint Archive}}
  (\bibinfo{year}{2021}).
\newblock


\bibitem[Kelkar et~al\mbox{.}(2020)]%
        {kelkar2020order}
\bibfield{author}{\bibinfo{person}{Mahimna Kelkar}, \bibinfo{person}{Fan
  Zhang}, \bibinfo{person}{Steven Goldfeder}, {and} \bibinfo{person}{Ari
  Juels}.} \bibinfo{year}{2020}\natexlab{}.
\newblock \showarticletitle{Order-fairness for byzantine consensus}. In
  \bibinfo{booktitle}{\emph{Advances in Cryptology--CRYPTO 2020: 40th Annual
  International Cryptology Conference, CRYPTO 2020, Santa Barbara, CA, USA,
  August 17--21, 2020, Proceedings, Part III 40}}. Springer,
  \bibinfo{pages}{451--480}.
\newblock


\bibitem[Khalil et~al\mbox{.}(2019)]%
        {khalil2019tex}
\bibfield{author}{\bibinfo{person}{Rami Khalil}, \bibinfo{person}{Arthur
  Gervais}, {and} \bibinfo{person}{Guillaume Felley}.}
  \bibinfo{year}{2019}\natexlab{}.
\newblock \showarticletitle{Tex-a securely scalable trustless exchange}.
\newblock \bibinfo{journal}{\emph{Cryptology ePrint Archive}}
  (\bibinfo{year}{2019}).
\newblock


\bibitem[Klarman et~al\mbox{.}(2018)]%
        {klarman2018bloxroute}
\bibfield{author}{\bibinfo{person}{Uri Klarman}, \bibinfo{person}{Soumya Basu},
  \bibinfo{person}{Aleksandar Kuzmanovic}, {and} \bibinfo{person}{Emin~G{\"u}n
  Sirer}.} \bibinfo{year}{2018}\natexlab{}.
\newblock \showarticletitle{bloxroute: A scalable trustless blockchain
  distribution network whitepaper}.
\newblock \bibinfo{journal}{\emph{IEEE Internet of Things Journal}}
  (\bibinfo{year}{2018}).
\newblock


\bibitem[Kursawe(2020)]%
        {kursawe2020wendy}
\bibfield{author}{\bibinfo{person}{Klaus Kursawe}.}
  \bibinfo{year}{2020}\natexlab{}.
\newblock \showarticletitle{Wendy, the good little fairness widget: Achieving
  order fairness for blockchains}. In \bibinfo{booktitle}{\emph{Proceedings of
  the 2nd ACM Conference on Advances in Financial Technologies}}.
  \bibinfo{pages}{25--36}.
\newblock


\bibitem[Malkhi and Szalachowski(2022)]%
        {malkhi2022maximal}
\bibfield{author}{\bibinfo{person}{Dahlia Malkhi} {and} \bibinfo{person}{Pawel
  Szalachowski}.} \bibinfo{year}{2022}\natexlab{}.
\newblock \showarticletitle{Maximal extractable value (mev) protection on a
  dag}.
\newblock \bibinfo{journal}{\emph{arXiv preprint arXiv:2208.00940}}
  (\bibinfo{year}{2022}).
\newblock


\bibitem[Pichl and Kaizoji(2017)]%
        {pichl2017volatility}
\bibfield{author}{\bibinfo{person}{Luk{\'a}{\v{s}} Pichl} {and}
  \bibinfo{person}{Taisei Kaizoji}.} \bibinfo{year}{2017}\natexlab{}.
\newblock \showarticletitle{Volatility analysis of bitcoin}.
\newblock \bibinfo{journal}{\emph{Quantitative Finance and Economics}}
  \bibinfo{volume}{1}, \bibinfo{number}{4} (\bibinfo{year}{2017}),
  \bibinfo{pages}{474--485}.
\newblock


\bibitem[Yang et~al\mbox{.}(2022)]%
        {yang2022sok}
\bibfield{author}{\bibinfo{person}{Sen Yang}, \bibinfo{person}{Fan Zhang},
  \bibinfo{person}{Ken Huang}, \bibinfo{person}{Xi Chen},
  \bibinfo{person}{Youwei Yang}, {and} \bibinfo{person}{Feng Zhu}.}
  \bibinfo{year}{2022}\natexlab{}.
\newblock \showarticletitle{Sok: Mev countermeasures: Theory and practice}.
\newblock \bibinfo{journal}{\emph{arXiv preprint arXiv:2212.05111}}
  (\bibinfo{year}{2022}).
\newblock


\bibitem[Zhang et~al\mbox{.}(2022)]%
        {zhang2022flash}
\bibfield{author}{\bibinfo{person}{Haoqian Zhang}, \bibinfo{person}{Louis-Henri
  Merino}, \bibinfo{person}{Vero Estrada-Galinanes}, {and}
  \bibinfo{person}{Bryan Ford}.} \bibinfo{year}{2022}\natexlab{}.
\newblock \showarticletitle{Flash freezing flash boys: Countering blockchain
  front-running}. In \bibinfo{booktitle}{\emph{2022 IEEE 42nd International
  Conference on Distributed Computing Systems Workshops (ICDCSW)}}. IEEE,
  \bibinfo{pages}{90--95}.
\newblock


\bibitem[Zhang et~al\mbox{.}(2020)]%
        {zhang2020byzantine}
\bibfield{author}{\bibinfo{person}{Yunhao Zhang}, \bibinfo{person}{Srinath
  Setty}, \bibinfo{person}{Qi Chen}, \bibinfo{person}{Lidong Zhou}, {and}
  \bibinfo{person}{Lorenzo Alvisi}.} \bibinfo{year}{2020}\natexlab{}.
\newblock \showarticletitle{Byzantine ordered consensus without byzantine
  oligarchy}. In \bibinfo{booktitle}{\emph{14th USENIX Symposium on Operating
  Systems Design and Implementation (OSDI 20)}}. \bibinfo{pages}{633--649}.
\newblock


\end{thebibliography}

\appendix
\section{Proof of Theorem~\ref{thm:advwealthoptimal}}
\label{apx: theorem 4}

\begin{proof}
Supposing $W$ is the total wealth of adversary at the end of an epoch $e$ under any policy $\pi$. 
The total wealth of the adversary at the end of epoch $e+1$ under $\pi$ is at most $W + \frac{W}{y} f\eta$. 
Hence, the total wealth of adversary at end of $k$ epochs is at most $W_a[0]\left( 1 + \frac{f\eta}{y}\right)^k$ under $\pi$. 

Now, since $\pi_a$ is a balanced policy 
\begin{align}
\tilde{W}_a[1] = W_a[0] + \left\lfloor \frac{W_a[0]}{y} \right\rfloor f\eta \geq W_a[0]\left( 1 + \frac{ f \eta}{y} \right) - f\eta.
\end{align}
Similarly, 
\begin{align}
\tilde{W}_a[2] \geq \tilde{W}_a[1]\left( 1 + \frac{f\eta}{y} \right) -f\eta  \geq W_a[0] \left( 1 + \frac{f\eta}{y} \right)^2 -f\eta \left(1 + \frac{f\eta}{y} + 1\right). 
\end{align}
Repeating $k$ times, we have 
\begin{align}
\tilde{W}_a[k] \geq W_a[0] \left( 1 + \frac{f\eta}{y} \right)^k - f\eta \frac{\left(1 + \frac{f\eta}{y} \right)^k - 1}{\frac{f\eta}{y}} \notag \\
=  W_a[0] \left( 1 + \frac{f\eta}{y} \right)^k - y \left(1 + \frac{f\eta}{y} \right)^k + y. 
\end{align}
Therefore the ratio between total wealth achieved under $\pi_a$ and the upper bound is 
\begin{align}
\lim_{k\rightarrow \infty} \frac{W_a[0]\left( 1 + \frac{f\eta}{y}\right)^k}{W_a[0] \left( 1 + \frac{f\eta}{y} \right)^k - y \left(1 + \frac{f\eta}{y} \right)^k + y} = \frac{W_a[0]}{W_a[0] - y} \leq \frac{y}{y-\eta \epsilon}k 
\end{align}
where the last inequality is due to $W_a[0] > y^2/(\eta \epsilon)$. 
\end{proof}

\end{document}